\DeclareMathOperator*{\expectation}{E}
\newcommand{\Exp}[1]{\expectation \left[ #1\right]}
\DeclareRobustCommand{\stirling}{\genfrac\{\}{0pt}{}}
\definecolor{TUMBlau}{RGB}{0,101,189} % Pantone 300
\definecolor{TUMBlauDunkel}{RGB}{0,82,147} % Pantone 301
\definecolor{TUMBlauHell}{RGB}{152,198,234} % Pantone 283
\definecolor{TUMBlauMittel}{RGB}{100,160,200} % Pantone 542
\definecolor{TUMElfenbein}{RGB}{218,215,203} % Pantone 7527 -Elfenbein
\definecolor{TUMGruen}{RGB}{162,173,0} % Pantone 383 - Grün
\definecolor{TUMOrange}{RGB}{227,114,34} % Pantone 158 - Orange
\definecolor{TUMGrau}{gray}{0.6} % Grau 60%
\newtheorem{thm}{Theorem}[section]
\newcommand{\mgrev}[1]{\textcolor{black}{#1}}
\newcommand{\mgrevv}[1]{\textcolor{black}{#1}}
\newcommand{\mgrevvv}[1]{\textcolor{black}{#1}}
\begin{document}
%
% paper title
% Titles are generally capitalized except for words such as a, an, and, as,
% at, but, by, for, in, nor, of, on, or, the, to and up, which are usually
% not capitalized unless they are the first or last word of the title.
% Linebreaks \\ can be used within to get better formatting as desired.
% Do not put math or special symbols in the title.

\title{ Admission Control based Traffic-Agnostic \\ Delay-Constrained Random Access  \\ (AC/DC-RA) for M2M Communication
	}%\\ for M2M Communication}

% author names and affiliations
% use a multiple column layout for up to three different
% affiliations
%\author{{H. Murat G\"ursu, Mikhail Vilgelm, Alberto M. Alba, Wolfgang Kellerer}\\
%{Chair of Communication Networks
%Technical University of Munich
%Munich, Germany} \\
%Email: murat.guersu@tum.de  }

\author{H. Murat G\"ursu,~\IEEEmembership{Student Member,~IEEE,} Mikhail Vilgelm,~\IEEEmembership{Student Member,~IEEE,} Alberto M. Alba,~\IEEEmembership{Student Member,~IEEE,}, Matteo Berioli, Wolfgang Kellerer,~\IEEEmembership{Senior Member,~IEEE,}}
%%\authornote{The secretary disavows any knowledge of this author's actions.}
%\affiliation{%
%	\institution{Chair of Communication Networks
%		Technical University of Munich
%		}
%	%\streetaddress{P.O. Box 1212}
%	\city{Munich, Germany} 
%	%\state{Ohio} 
%	\postcode{80333}
%}
%\email{murat.guersu@tum.de}
%
%\keywords{Random Access, Delay Constraint, Tree Resolution, Admission Control}

% conference papers do not typically use \thanks and this command
% is locked out in conference mode. If really needed, such as for
% the acknowledgment of grants, issue a \IEEEoverridecommandlockouts
% after \documentclass
%
% for over three affiliations, or if they all won't fit within the width
% of the page, use this alternative form

% use for special paper notices
%\IEEEspecialpapernotice{(Invited Paper)}

\maketitle 
% make the title area
%\maketitle
 %is agnostic to the request arrival distribution that 

%, while guaranteeing certain service levels.

% As a general rule, do not put math, special symbols or citations
% in the abstract Initially invented to support a channel for a phone call of certain duration, then morphed into allocating packets of various size on resource blocks. We achieve this with a new admission step before the collision resolution to overcome this bottleneck. 

%The emergent Internet of things concept is disrupting the way the wireless communication systems are designed.

%The bottleneck in communications has been the scheduling channel where an admission decision is used to steer the traffic. With the introduction of Internet of things we now deal with obtaining requests of negligible size from massive number of devices. The bottleneck is now on the random access channel. To overcome this problem we propose a novel medium access protocol SRAwG that can provide spontaneous access latency guarantees for machine to machine communications. We realize such an admission channel with an estimation algorithm and a new look to the contention resolution. We provide in depth analysis of this protocol. We show with simulations that SRAwG guarantees access times where the state of the art solutions fall short. 
% on depending on the maximum number of devices expected at one time-slot

\begin{abstract}
 %%We 
  The problem of wireless M2M communication is twofold: the reliability aspect and the scalability aspect. The solution of this problem demands a delay constrained random access protocol. To this end we propose Admission Control based Traffic Agnostic Delay Constrained Random Access (AC/DC-RA) protocol. Our main contribution is enabling the stochastic delay constraints agnostic to the traffic, such that the stochastic delay constraint is valid with respect to varying number of arrivals. We achieve this with an admission control decision that uses a novel collision estimation algorithm for active number of arrivals per contention resource. We use an adaptive contention resolution algorithm to react to the varying number of arrivals. Using these tools, the admission control solves the stability problem. We show with simulations that AC/DC-RA provide stochastic delay constrained random access in a traffic agnostic way to sustain a stable performance against Poisson and Beta arrivals without any modification to the protocol. 
 \end{abstract}

\section{Introduction}

In the last ten years, the yearly productivity in U.S. grew $42 \%$ percent without any change in total hours worked in a year \cite{ford2015rise}. This has been enabled through education and automation. Automation is based on sensors for gathering information which needs a communication infrastructure. Machine to machine (M2M) communications is becoming a possible solution through cheap hardware.  However, the scale of the deployed sensors is beyond the capacity of the current wireless networks \cite{hasan2013random}.% the random access.

The scaling problem is dubbed in the current research as massive random access problem. The reliability aspect of the sensor communication which is reflected in the 5G research as the Ultra Reliable Low Latency Communications (URLLC). Reliability is defined as the percentage of the devices that obtain the requested service. Part of the research is focusing on solving the random access challenge reliably \cite{lien2012cooperative} within a bounded delay, resulting in a massive reliable random access problem. Both problems combined can be reformulated in resource management terms as delay guarantees for \textit{massive} number of users.  % We will investigate this problem in detail. %This is the problem that we are working on in this paper.

The performance analysis for random access is built on certain assumptions about arrival traffic. This constrains the analytical guarantees such that they are only valid with the assumed traffic distribution. Exposed to a different traffic distribution, the system maybe considered instable in terms of a performance metric e.g., delay.

In order to provide traffic agnostic guarantees the protocols should be able to react to different arrival distribution. We can explore the reactivity of a protocol against a chosen metric such as stability. Tree algorithms for instance have a stable throughput with respect to various arrival traffic. However, the same cannot be said for delay such that it scales linearly with the number of users added to the system. High delay is not tolerable for some applications, and should be bounded. Protocols that solve such a bounded delay problem are required. 

%A traffic agnostic admission control based delay constrained random access (AC/DC-RA) protocol should : (1) use a contention resolution algorithm that isolates collided users from new arrivals and (2) can solve users faster with an adaptive mechanism, (3) reject some of the users that cannot be served within the delay constraint to make space for the other users. Such an adaptive contention resolution requires an estimation of number of collided users.   

In this work we present Admission Control based Traffic Agnostic Delay Constrained Random Access (AC/DC-RA) protocol, that provides stochastic delay bounds. The \textit{Traffic Agnosticism} is achieved via separation of the backlog and initial access. \mgrevv{\textit{Stochastic Delay Constraint} is enabled} through an \textit{Admission Control} decision that is based on a novel collision multiplicity estimation algorithm.
% and has a stabilized throughput
Our contributions are four-fold: 
\begin{enumerate}
\item We use a novel admission control decision, that takes place before the contention resolution (Sec.~\ref{sec:admission}). This enables guarantees for traffic agnostic \mgrevv{stochastic delay constraints} for random access.
\item A novel scalable collision multiplicity estimator is provided that is based on the famous Coupon Collector's Problem (Sec.~\ref{sec:esti}).
\item We make use of a Parallel Multi-Channel Tree Resolution that re-arranges exploration of the contention slots in order to achieve stochastic delay bounds (Sec.~\ref{sec:mpcta}).
\item We provide a dimensioning model for the suggested AC/DC-RA protocol which provides optimized use of system resources (Sec.~\ref{sec:analysis}).
\end{enumerate}

\subsection{Notation}

The sets are denoted with calligraphic capital letters $\mathcal{A}$. Sequences are denoted with bold lower-case letters $\mathbf{a}$. Sequences of sequences are denoted with bold upper-case letters $\mathbf{A}$. $\text{E}[.]$ is used for expectation and $e$ denotes the natural exponent. $\hat{(.)}$ is used for the estimated quantities.

\section{System Model}
\label{Sec:sys}
%\begin{figure}
%	\includegraphics[width=\textwidth, trim = 5cm 3cm 3cm 3cm]{schedule_opt_flow_diagrampdf}
%	\caption{Flow Diagram for schedule calculation}
%	\label{Model_1}
%\end{figure}
%\begin{figure}[!h]
%	\centering
%	\includegraphics[width=0.5\textwidth]{RACH_tree}
%	\caption{Separation of Random Access resources}
%	\label{fig:rach_tree}
%\end{figure}
% where the two dimensions can be frequency and time. One resource in frequency $f$ and time $t$ is denoted as $R_{f,t}$. The set of all resources on same frequency will be denoted as $\mathcal{R}_{f}$.
% as in Fig. \ref{fig:srawg}
 The topology is star with a central station. The traffic is assumed to be uplink only. \mgrev{There are $N_{max}$ total users and $N_t$ active users at a time-instance $t$. The traffic of the uplink communication is sporadic, thus the set of active users is unknown to the receiver.} We base our resource model on a two dimensional grid like in an OFDMA system where one dimension is frequency and the other is time. We define, each one of the cells, as a resource. \mgrev{The resources are used on a contention basis.} \mgrevvv{Due to the broadcast nature of the wireless communication each sensor can access the same resource at the same time and interfere with each other. This phenomena is called a collision. This behaviour can be abstracted with a model.} We use the collision channel model i.e., \mgrevvv{resources} have 3 distinct states, idle (0, no request), singleton (1, 1 request) or a collision ($e$, $>$1 requests). \mgrev{{Unless physical layer enhancements are assumed the central entity cannot differentiate two or more users and treat them equally.}} We assume no capture or interference cancellation capability. We also assume an instant and costless feedback. Implementation of such feedback channels is discussed in previous work \cite{gursuslotted}.  We will use the term backlogged user for the collided users and initial arrival for the first attempt.
%All the resources are contention and can simply be modeled as multi-channel Slotted ALOHA (MC-SA).

We define Quality of Service (QoS) as the reliability ($R$) that a packet is received at the destination within a certain delay constraint ($L$) after it is generated. We denote a set of sensors that have the same QoS requirement as class $j$ and its reliability requirement as $R_j$ and delay constraint as $L_j$. The delay $L$ incorporates delay stemming from re-transmissions due to collisions and reflect the performance of the random access channel. Any delay stemming from channel fading is not considered in this paper and only a radio resource perspective is evaluated.

We will use the term outer protocol\footnote{The inner and outer protocol was introduced to us, through a reviewer for one our previous papers. It is used initially in PDFSA paper \cite{barletta20110} of Barletta et. al. } for the traffic shaping part of the protocol that is achieved via the admission control and the term inner protocol for the contention resolution part.

\subsection{Proposal}
\begin{figure}[t!]
	\centering
	\begin{tikzpicture}[->,>=stealth',shorten >=1pt,auto,node distance=2.5cm,
	semithick, scale = 0.4, every node/.style={scale=0.6}]
	\pgfdeclarelayer{background}
	\pgfdeclarelayer{foreground}
	\pgfsetlayers{background,main,foreground}
	\tikzstyle{startstop} = [rectangle, rounded corners, minimum width=1.5cm, minimum height=1cm,text centered, draw=black, fill=TUMBlauDunkel!30]
	\tikzstyle{io} = [trapezium, trapezium left angle=70, trapezium right angle=110, minimum width=3cm, minimum height=1cm, text centered, draw=black, fill=TUMBlau!30]
	\tikzstyle{process} = [rectangle, minimum width=3cm, , text width=3cm, minimum height=1cm, text centered, draw=black, fill=TUMBlauHell!30]
	\tikzstyle{decision} = [diamond, minimum width=3cm, minimum height=1cm, text centered, draw=black, fill=TUMBlauMittel]
	\tikzstyle{arrow} = [thick,->,>=stealth]
	\node (start) [startstop] {Activate};
	\node (pro1) [process, below of=start] {Select \\ Admission Channel};
	\node (pro2) [process, below of=pro1] {Select \\ Resource Block};
	\node (coll) [decision, below of=pro2, yshift=-1cm] {Collision};
	\node (adm) [decision, below of=coll, yshift=-1cm] {Admission};
	\node (pro3) [process, below of=adm, yshift=-0.9cm] {Wait Feedback \\ for Tree Resolution};
	\node (succ) [startstop, right of=adm,xshift=2cm] {Success};
	\node (fail) [startstop, left of=adm,xshift=-2cm] {Fail};
	\begin{pgfonlayer}{background}
	% Compute a few helper coordinates
	\path (pro1.west |- pro1.north)+(-0.4,0.2) node (a) {};
	\path (pro2.south -| pro2.east)+(+0.4,-0.7) node (b) {};
	\path[fill=TUMGrau!20,rounded corners, draw=black!50, solid]
	(a) rectangle (b);
	\path (pro3.west |- pro3.north)+(-0.4,0.2) node (a) {};
	\path (pro3.south -| pro3.east)+(+0.4,-0.7) node (b) {};
	\path[fill=TUMGrau!20,rounded corners, draw=black!50, solid]
	(a) rectangle (b);
	%\path (AC1.west |- AC1.north)+(-0.2,0.1) node (a) {};
	%\path (IMU.south -| IMU.east)+(+0.2,-0.2) node (b) {};
	%\path[fill=green!20,rounded corners, draw=black!50, dashed]
	%(a) rectangle (b);
	
	\end{pgfonlayer}

	\draw [arrow] (start) -- (pro1);
	\draw [arrow] (pro1) -- (pro2);
	\draw [arrow] (pro2) -- (coll);
	\draw [arrow] (coll) -- node[anchor=west] {Yes} (adm);
	\draw [arrow] (coll) -| node[anchor=west] {No} (succ);
	\draw [arrow] (adm) -- node[anchor=west] {Admitted} (pro3);
	\draw [arrow, bend left] (pro3) -| (succ);
	%\draw [arrow] (adm) --node[anchor=south] {Success} (succ);
	\draw [arrow] (adm) --node[anchor=south] {Reject} (fail);
	\node[below of=pro2,yshift=0.6cm, xshift= 1cm] (layerprot) {\fontsize{8}{8}\selectfont Outer Protocol};
	\node[below of=pro3,yshift=0.6cm, xshift= 1cm] (layerprot) {\fontsize{8}{8}\selectfont Inner Protocol};
	%\tikzstyle{text}=[draw, fill=white!, text width=22em, text centered, minimum height=2.5em]
	\end{tikzpicture}
	
	\caption{AC/DC-RA Flow Diagram - Sensor perspective}
	
	\label{fig:flow_diagram}
\end{figure}
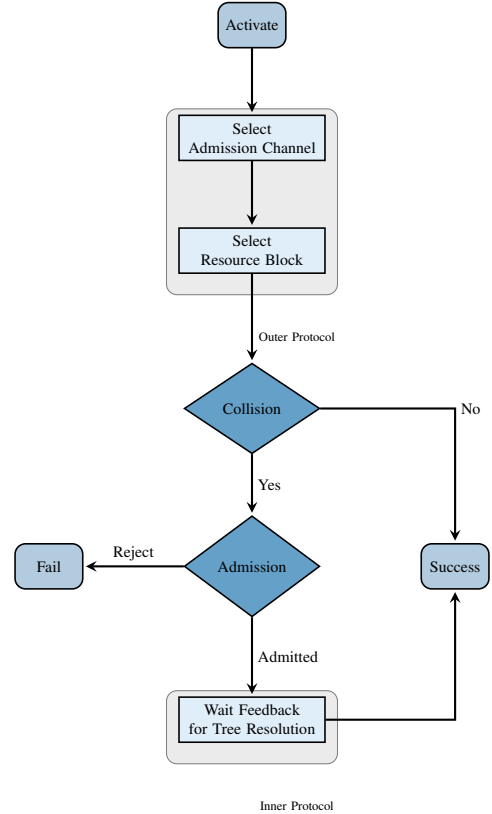
%A typical random access such as MC-SA requires modifications to enable a Delay Constrained Random Access Protocol. 
We propose an outer protocol that separates the initial arrivals from backlogged users. And an inner protocol that resolves each set of backlogged users in an isolated manner. \mgrev{The outer protocol is used for initial arrivals only. Users may collide through the use of the outer protocol. An admission decision is given for the collided users through the outer protocol. If admitted, the collided users access the inner protocol. The outer protocol uses an Admission Channel (AC) and the inner protocol uses a Resolution Channel (RC). } 

The admission is based on the \mgrevv{stochastic delay constraint} of the user, the collision multiplicity and the available capacity of the resolution channel. In the following, we explain in detail how this decision is taken. \mgrev{{The admission decision is only for the resolution resources i.e., contention resources. We do not consider a contention free resource admission scenario and it is left for future work.}} 

The set of resources $\mathcal{M}_{AC}$ and $\mathcal{M}_{RC}$ form Admission Channel (AC) and Resolution Channel (RC) respectively where ${M}_{AC} + {M}_{RC} = 	{M}$ being the total number of resources with ${M}_{x} = |\mathcal{M}_{x}|$ denoting the cardinality of the set. There may be multiple admission channels with respect to each QoS class $j$ denoted as $\mathcal{M}_{AC_j}$ and $\sum {M}_{AC_j}= {M}_{AC}$ .

\mgrev{This protocol can be summarized with a flow diagram as given in Fig.\ref{fig:flow_diagram}}. When an event notification is received, the user is activated and starts using the outer protocol. \mgrev{ It selects the \textbf{admission channel} that is appropriate for the QoS class. There are more than one admission channel so that the system can infer the QoS from the channel. Then it selects one of the \textit{resources} in that channel. This selection is done with pre-set probabilities known to the user. It transmits the packet using that resource. This terminates the outer protocol. The outcome of the transmission can be a success or a collision. The central entity observes the outcome for that resource. If it is a success, the user is informed via a broadcast and it goes back to sleep mode. If a collision occurred, then an \textbf{admission control decision} is taken for that resource by the central entity. All users that have used that resource are either rejected or admitted and informed via a broadcast feedback. In case of a rejection, a user may have another radio interface. Or the sensor can report the failure to higher layers and trigger higher layer solutions e.g. switch to local control. In case of an admission, the inner protocol is initiated. The inner protocol used is a binary tree algorithm such that after each collision users have to re-select one of two new resources. The users are informed about the resources via a broadcast feedback. The feedback and the allocation method of these resources guarantee that all admitted users are successfully resolved by the inner protocol before the delay constraint.}

\begin{figure}[!t]
	\centering
	\includegraphics[width=0.44\textwidth]{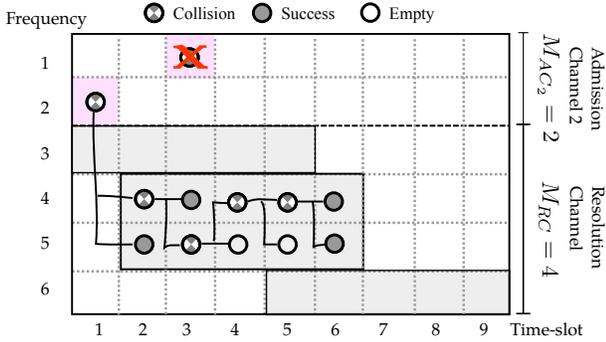}
	\caption{ Resource separation between the inner and outer protocol of AC/DC-RA and the story of a set of requests that selected the same resource through AC/DC-RA protocol.}
	
	\label{fig:srawg}
\end{figure}
 An example for the resource allocation is illustrated in Fig.~\ref{fig:srawg}. \mgrev{The illustration shows the allocation of Admission Channel and Resolution Channel resources on the resource grid, where the horizontal axis represents time and the vertical axis represents frequencies. Resource use is colored in pink for initial access and in gray for backlogged access. The boxes depict the resources allocated to the respective resolution in RC. The rejection of an initial access is depicted with a red cross. For clarification of the example ternary outcome $(0,1,e)$ of the resource use is illustrated with different symbols.} A user that requires to report a fire within $100$ ms with $0.99$ reliability selects the admission channel $2$ that represents its QoS requirement in this case. It transmits the data packet on frequency two at time-slot 1 which is a resource part of that Admission Channel. The outcome is a collision as other users have selected the same resource. Then the number of users that accessed this resource is estimated. The admission control decides that the resolution is possible within the delay constraint ($100$ ms) with the given reliability. The delay constraint is represented with 6 time-slots in the example. The admission control calculates the number of frequencies as $2$ needed for parallelization. Then it checks if it is possible to allocate $2$ frequencies in the resolution channel for that resolution. As there is available capacity in RC, the user in our example and all the other users that have collided with it are admitted to RC and resolved with a tree resolution. \mgrev{The allocated resource grid for the resolution is illustrated as a gray box limited with the lines. Within this gray box the collided users make a random selection on each time-slot bound to frequency 4 and 5. On time-slot 2, three of the users have selected the frequency 4 while one user have selected the frequency 5. In this case outcome on frequency 4 is a collision and on frequency 5 is a success. The collided users re-select one of the frequencies randomly again on the time-slot 3. This time one user has selected the frequency 4 and two users have selected the frequency 5. This results in another success. Two of the users still need to be resolved. Thus, process continues until time-slot $6$ where both of the users have selected their own resource. The resolution is completed before the delay constraint as guaranteed by the admission decision.} In the meanwhile another sensor is rejected at time-slot $3$ since required capacity is not available in RC. %Of course in this example, if the ID of a fire sensor is decoded, that would be enough to conclude the fire event. However, we want to be able to adapt when higher payload is required.

\section{AC/DC-RA - Outer Protocol}
\label{sec:out}
%In this section we will describe how the users initially access the medium and get to use the inner protocol through the outer protocol. 

The outer protocol is used for the initial access of the devices.  We do not use any collision avoidance mechanism to avoid delay before any user can reach the system. 

%The initial access for the users are constrained only in two ways.

Our proposal is based on two design choices. First, there are multiple Admission Channels and the user should select the one that is appropriate for the \textit{Quality of Service class}. Second, we customize the resource selection probabilities within any of the Admission Channels to enable collision multiplicity estimation for arbitrary number of active sensors. 
Lastly, the Outer Protocol is terminated when this information is transferred to the admission control which is the gateway between two sub-protocols.

\subsection{Separate Admission Channels - QoS Information}
\label{sec:qos}
We assume that all the devices have gone through an initial connection establishment or have overheard a broadcast. Through this information exchange, each device is aware of the appropriate admission channel for the required QoS. 

%In the prior sections we have introduced the Admission Channel for the initial access.
 There are multiple ACs for the initial access for different QoS classes, such that all the users in the same AC require the same delay bound and reliability. The AC is a set of resources $\mathcal{M}_{{AC}_{j}}$ e.g. for QoS class $j$. Sum of all the resources \mgrev{orthogonal to time} in the admission channels results in cardinality of the admission channels ${M}_{AC}$. \mgrev{As detailed in Section \ref{Sec:sys} a resource represents a single cell in the resource grid. }

\mgrevvv{We assume that a slot size is fixed and the bandwidth of a slot matches the payload size for each specific class. As each class is using a fixed AC, it can be expected that each AC has a unique bandwidth matching the payload size. Different classes can co-exist as different slices in the same resource grid for heterogeneous bandwidths. The possibility of this approach is investigated in 5G standardization under the bandwidth parts topic \cite{jeon2018nr}. Bandwidth parts enable co-existence of different payloads through adjusting the bandwidth of a slot. For the rest of the paper we assume homogeneous payload size among different classes and the effects of heterogeneous bandwidths are not investigated.}
%, $\mathcal{M}_{{AC}_{i}}=\{{A}_1, {A}_2, {A}_3\}$.

%One can think of more flexible admission channel definition rather than fixing it to certain delay and reliability requirement. However, this is not the main target of this work. In the following subsection, we introduce how the resource block selection probabilities are modified. 

\subsection{Resource Selection Probabilities - Collision Size Estimator}
\label{sec:esti}%Adaptation through a 
%However, dynamic tree algorithms achieve a throughput of $0.4295$. But they require an estimation of the number of collided users for the initial collision.
% In case we are agnostic to the arrival distribution the information we obtain from initial arrivals is just 

We assume that a set of sensors of size $N_t$ at time instant $t$, selects randomly one resource from a set of resources in the admission channel at the same time. Depending on this selection a sensor may collide or be successful. Also some resources maybe unoccupied. The central entity can only observe the ternary outcome (0, 1, e) of these resources. From this outcome it has to make a collision size guess. 

A similar estimation problem has already been investigated in the state of the art for RFID tag readings \cite{kodialam2006fast} for throughput optimization. However, the estimation time scales at best linearly with the number of sensors $N_t$. However, the work relies on Poisson approximation that is valid only with high number of resources. Usually, such resources are scarce and costly in terms of delay. To solve this problem, another work has considered the resource selection probabilities as a design parameter trading off precision for estimation speed \cite{stefanovic2013joint}. Here, we aim at generalizing such an estimation to any number of active sensors and map it to the well-known Coupon Collector's Problem (CCP). 
%\mg{check}
%(these used to be uniform) 

 %Only design parameter at hand is the probability of selecting different frequencies.

 % We do not want to add any delay before the initial access, thus no shaping or collision avoidance can be used. 

 %One important point is that the second approach requires polling which constraints this only when all devices can react to the same call message. 
 
% We re-define the problem as such. 
%We can extract ternary feedback from each preamble, and we have $M$ preambles. 
%Actual reason may be single parameter optimization, since a multi-parameter solution is not tractable with the maximum likelihood. 
%If we consider that each channel is a coupon and each contender is drawing a coupon. 
%In order to generalize this we take a different approach. 

%We want to estimate the number of contenders at a specific time-slot and more specifically how many collision size information. We claim that we can re-formulate our problem like the well known Coupon Collector's Problem (CCP). 

\subsubsection{Coupon Collector's Problem}
There are $M$ unique coupons that are obtained through independent draws from an urn with replacement. The problem is to find the expected number of draws until all $M$ coupons are collected. Coupons may have equal or unequal selection probabilities. We will refer to selection probability  of the $i^{\text{th}}$ coupon as $p_i$ such that, 

\begin{equation}
1 = \sum_{i=1}^{M} p_i.
\label{eq:ccp_const}
\end{equation}

 This problem is solved for equal and unequal coupon selection probabilities \cite{adler2003coupon}. We do not focus on expected number of draws until all $M$ coupons are collected, but we will focus on the expected number of draws  given a certain set of uniquely drawn coupons $\mathcal{M}_{s+c}$. Thus, we are guessing the expected number of draws that have been made given that a certain set of unique collected coupons.

\subsubsection{Analogy to Collision Size Estimation}

 We define a contention in a single time-slot $t$ as an experiment. Suppose there are $N_t$ sensors selecting $M$ resources randomly on a contention basis at time-slot instance $t$. We observe the outcome of the contention on these $M$ resources. We define the outcome on a contention resource $i$ as $o_i$ where a sequence of outcome is  $\mathbf{o} =(o_1,o_2,o_3,o_4)=(1,0,e,1)$ for an example with $M=4$. The ternary outcome $o_i \in \{0,1,e\}$ of the contention for resource $i$ is converted to the set of coupons collected. We consider idle resources as not-selected coupons, i.e., the $\mathcal{M}_{s+c}$ can be defined as,
 
 \begin{equation}
 i
 	\begin{cases}
 	 \in \mathcal{M}_{s+c} \text{ if } o_i \neq 0 \\
 	 \notin  \mathcal{M}_{s+c} \text{ if } o_i = 0.
 	\end{cases}
 \end{equation}
   Using this set we calculate the expected number of draws $E[Z]$, corresponding to the estimated number of sensors at time-slot $t$ $\hat{N}_t$.
  Then, the set of selected coupons can be written as $\mathcal{M}_{s+c} = \{1,3,4\}$ since resource $2$ is idle. Using the probability of selecting any of the $M$ resources.
 
%$(h_n)_{n \in \mathcal{M}} = (h_1, h_2, ..., h_M) = (e, 0, ..., 1)$. An example is the feedback sequence

  %We can form the analogy concretely by considering that each frequency is a coupon and each request is a coupon draw. The outcome of the contention is a set of uniquely drawn coupons and the guess for the active number of requests at that time-slot is the expected number of draws.

 %Thus, inferring the total amount of coupon draws maps to the total number of users and the quantity for each unique coupon will exactly give us the number of users on a resource.

%We want to look at the number of preambles selected and with the knowledge of unequal probabilities guess the number of draws that have been made. In some way reversing the CCP.

The estimated number of active sensors $\hat{N}_t$ is given with expected number of draws given a set of uniquely drawn coupons with unequal probabilities 

\begin{equation}
	\hat{N}_t  = \Exp{Z | \mathcal{M}_{s+c}} = \sum\limits_{z=0}^{\infty} \left( 1 - \prod\limits_{i \in \mathcal{M}_{s+c}}(1-e^{-p_i z}) \right),%dt,
	\label{eqn:ccp_ue}
\end{equation}
where the probability that a sensor did not select a resource $i$ is multiplied for each resource for $z$ sensors. Then this is subtracted from one to calculate the probability that all of these resources are  selected at least once. Then the expectation is taken over $z$. The sum is up to infinity to calculate the probability of an outcome given there are up to infinite sensors. For large enough $z$, probability that a resource is not selected converges to $0$. So Eq.~\eqref{eqn:ccp_ue} gives us the expected number of sensors given the outcome.  Further explanation for Eq.~\eqref{eqn:ccp_ue} is given in App.~\ref{app:proof1}. 

 %The number of active devices at that time instance can be calculated given the resource selection probabilities which we want to investigate in detail.

It is clear that each different selection $\mathcal{M}_{s+c}$ may give a different result in terms of number of sensors. We define the highest expected number of sensors as $\Exp{Z | \mathcal{M}_{s+c} = \mathcal{M}} = \hat{N}_{max}$ for, $\mathcal{M}$, the outcome of the complete set, i.e., $\forall\,\,i\,\, o_i \neq 0$, where we have a collision or success on all resources. %This is the first limitation of such the estimator. 
The estimation range for the number of active sensors $N_t$ is up to $\hat{N}_{max}$. Therefore, the resource selection probabilities $p_i$ should be adjusted, such that $\hat{N}_{max}$ is larger than the worst case number of sensors. On the other hand it is intuitively clear that adjusting $p_i$ to increase $\hat{N}_{max}$ results in further decrease in precision of the estimation. Otherwise we can decrease $p_i$ to increase $N_{max}$ to infinity. 

%Variance though is not available in the current literature but the way to solve is given in \cite{Nath}. After the formula is corrected it can be used. The formula is not given here due to analytical intractability \mg{we can add it also let's think}. 

 %Thus we may investigate also the variance of this problem.  

%Up to our best knowledge the variance is not available in the current literature but the way to solve is given in \cite{Nath}. We doubt the analytical correctness of the steps. After the formula is corrected it may be used. The formula is not given here due to analytical intractability. \mg{We suggest tracking the variance through Monte Carlo simulations . does this make sense? remove in future} Since tracking the variance is out of scope of this paper we claim that selection for maximum number of requests should be as tight as possible. 

%\begin{figure}[!t]
%	\centering
%	\includegraphics[width=0.5\textwidth]{mean_plot_ccp.eps}
%	\caption{Expected number of draws for Coupon Collector's Problem with $M=18$ for various distributions}
%	\label{fig:mean_plot_ccp}
%\end{figure}
%Binary
%$2\cdot 10^5$ &
\begin{table}[!t]
	\centering
	\begin{tabular}{ c || c |c| c| c| c }
		 Dist. &   Geom & Pois. & $p^0= 10^{-2}$ &  $p^0= 10^{-3}$ &  $p^0= 10^{-4}$   	 \\\hline
		  $\hat{N}_{max}$ & 	$10^2$ &	$2\cdot 10^3$&	$1.5\cdot 10^2$&	$1,1\cdot 10^3$& $9 \cdot 10^3$	
	\end{tabular}
	\caption{Expected number of draws for Coupon Collector's Problem with $M=18$ for various distributions.}
	\label{fig:mean_plot_ccp}
\end{table}

% for the resource probabilities for channels in $\mathcal{M}$ Since intuitively setting $p_0$ affects the probability to have no contenders on a channel we take this approach.
 In Table.~\ref{fig:mean_plot_ccp} we have summarized $\hat{N}_{max}$ with different distributions of $p_i$. We have used the constraint in Eq.~\eqref{eq:ccp_const} in order to \mgrev{calculate $p_i$ for various distributions. The $p_i$ for each distribution is as follows: }(1) for geometric distribution with a fixed $p$ we set the selection probability as $p_i = (1-p)^i \cdot p$, (2) for Poisson distribution with a mean $\lambda$ we set the selection probability as $p_i = \frac{\lambda^i e^{-\lambda}}{i!}$, (3) for power series, defined the selection probability as $p_i = p^0 \cdot \alpha ^i$. We have to set $p^0$ and adjust $\alpha$ accordingly. \mgrev{We then used the Eq.~\eqref{eqn:ccp_ue} to calculate $\hat{N}_{max}$.} In Table.~\ref{fig:mean_plot_ccp} we see that $p^0 \approx \frac{1}{N_{max}}$. Thus, using the power series we can easily adjust the estimation.

\subsubsection{Collision Size Estimation}

 After we have the estimated number of active devices $\hat{N}_t$, we will use the maximum likelihood to partition these devices into each resource. In the following parts we will use $N_t$ instead of $\hat{N}_t$ for ease of reading.% using the initial feedback we had
 
 The problem is now to partition $N_t$ devices to $M$ bins. The partitioning is constrained with the outcome $\mathbf{o}$, i.e., collision on resource $2$ and success on resource $5$ translates in to $o_2 = e, o_5= 1$. Possible guesses $\mathbf{g}$ will be sequences that fulfills the outcome constraints. The guess of resource $i$ in the $x^\text{th}$ sequence is  $g_i^x$. We also use $g_i$ for a guess for resource $i$, and $\mathbf{g^x}$ as the guess sequence $x$. Now we can write the constraints

\begin{eqnarray}
 	g_i    \label{eqn:limits}  
\begin{cases}
 	=0 & \text{if } o_i = 0 \\
 	=1 & \text{if } o_i = 1 \\
 	\geq 2 \text{ , } \leq \left( N_t -\sum_{j=1}^{i-1}g_j \right)& \text{if } o_i = e.
 	\end{cases}  	
\end{eqnarray}

We define the guess set $\mathcal{G}$ such that it involves all guess sequences fulfilling a given outcome sequence $\mathbf{o}$ and the number of active devices $N_t$. For example, with $M= 3$ and a outcome sequence of $\mathbf{o}=(o_1=1,o_2=e,o_3=e)$ where we have $N_t= 7$ we will have  $\mathcal{G} = \left( (1,2,4), (1,3,3),  (1,4,2)\right) = \{ \mathbf{g^1} , \mathbf{g^2}, \mathbf{g^3}\}$, such that $g^2_3 = 3$ and $\mathbf{g^2} = \{1, 3, 3\}$. 

We can calculate the probability of a guess as in 
\begin{equation}	
P_{\mathbf{g}} = \prod_{i \in \mathcal{M}} \left( {{N_t - \sum_{j=1}^{i-1} g_j}\choose{g_i}}\left(p_i \right)^{g_i}\right).
\end{equation}

This will enable calculation of the most likely partition, to have an estimate on how many sensors are on each resource as 
\vspace{-0.3cm}
\begin{equation}
	\hat{\mathbf{u}} = \arg\max_{\mathbf{g}} P_{\mathbf{g}} , \forall\,\, \mathbf{g}\in \mathcal{G},
\end{equation}
where $	\hat{\mathbf{u}}$ is the sequence for the collisions size estimation for all resources. \mgrev{The equation is complex to calculate with increasing dimensions of $\mathbf{g}$ as it is a combinatorial maximum likelihood calculation. It depends on $N_{max}$ and cardinality $|\mathbf{g}|$ such that ${N_{max}}^{|\mathbf{g}|}$ cases may be evaluated depending on the feedback. For practical implementations a heuristic estimator can be used an example is as such \vspace{-0.2cm}
	\begin{equation}
	\hat{u_i} = 
	\begin{cases}
	\lceil p_i \cdot \hat{N} \rceil & o_i = e\\
	o_i & o_i \neq e
	\end{cases}
	, 
	\label{eq:heur_ccp}
	\end{equation}
	where $\hat{N}$ is the total backlog estimation given by Eq.~(3) that uses the outcome sequence $\mathbf{o}$ and $N_{max}$ as input }

\subsubsection{Comparison}

As a comparison for our estimation technique, we choose two maximum-likelihood estimators (MLE). First one is based on the observation of non-idle resources only $M_{s+c}\triangleq\sum_{i=1}^{M}\mathbbm{1}_{o_i\geq1}$ (that is, without knowledge of the number of idle resources), where $\mathbbm{1}$ is the indicator function . The MLE operates on the following exact probability of observing $M_{s+c}$ non-idle resources, given a total of $M$ resources and a total of $N_t$ sensors:

\begin{eqnarray}
P_{\text{MLE}}[M_{s+c}|M, N_t] &=& \frac{\stirling{N_t}{M_{s+c}}M!}{M^{N_t}(M-M_{s+c})!},\nonumber\\
\hat{N}_t &=& \arg\max_{N_t} P_{\text{MLE}}[M_{s+c}|M, N_t]
\label{eqn:mle_dumb}
\end{eqnarray}
where $\stirling{N_t}{M_{x}}$ are the Stirling number of the second kind.

Second comparative technique is adaptation of the work from Zanella~\cite{6134704} on the RFID collision set estimation. The work is based on observing the number of collided $M_c$ and successful $M_s$ resources, and, using the approximation of the exact expression, computes the maximum-likelihood $N_t$ by finding the roots of the expression, i.e. finding the number of resources that maximizes the idle likelihood while minimizing the collision likelihood as in:

\begin{equation}
\frac{N_t-M_s}{M_c} = \frac{\frac{N_t}{M}(e^{\frac{N_t}{M}}-1)}{e^{\frac{N_t}{M}}-1-\frac{N_t}{M}}.
\label{eqn:mle_zanella}
\end{equation}

The average collision size is then computed from $\hat{N_t}$ as in $\frac{\hat{N_t}-M_s}{M_c}$. It has to be noted that, since neither of MLE approaches vary the resource selection probabilities (i.e., both use uniform probabilities), none of them can give a reliable estimate above a certain total number of active devices $N_{max}$, i.e., whenever $M_c=M$ is observed.

%\begin{figure}[!t]
%	\centering
%	\includegraphics[width=0.43\textwidth]{est_errorbacklog.eps}
%	\caption{Mean backlog size estimation error}
%	\vspace{-0.4 cm}
%	\label{fig:backlog}
%\end{figure}

\begin{figure}[!t]
	\centering
	\includegraphics[width=0.33\textwidth]{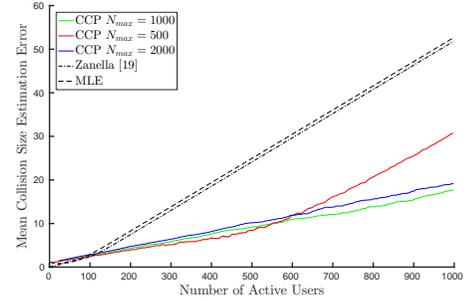}
	\caption{Mean collision size estimation error}
	\label{fig:collisionsize}
\end{figure}

\mgrev{We have conducted Monte Carlo simulations in MATLAB for comparing the estimators. The resource selection probabilities are set with respect to power distribution calculated in section \ref{sec:esti} for CCP and $N_{max}$ values are set as $500$, $1000$ and $2000$. The resource selection probabilities are set uniformly for the baseline case. The reason for this selection is that the state of the art uses the Poissonization of the outcomes which is a valid approach only with equal resource selection probabilities. }In Fig.~\ref{fig:collisionsize} collision size estimation error is plotted with 18 resources $M$. \mgrev{The absolute estimation error is calculated as $|\hat{u_i} - u_i|$ taking the difference between estimated and actual number of users per resource $i$, which is then averaged as in $\expectation[|\hat{u_i} - u_i|]$ over multiple runs and multiple resources.} CCP is compared against the state of the art with varying the number of active devices from $1$ to $1000$. Each number of active users are simulated for 1000 runs. The limitation of uniform resource selection is observed from the results. \mgrev{The MLE estimator saturates with $M=18$ after 100 users since the observation is always a set of collisions when the resource blocks have equal probability to be accessed. Thus, the MLE estimates 100 users with full collision set and the error linearly grows with the number of active users. In CCP, with increasing number of users an idle occurs and this enables scalability up to $N_{max}$ active users. }
 
 \mgrev{We have also evaluated an error in the setting of $N_{max}$ and how such a wrong setting will affect the system in Fig.~\ref{fig:collisionsize}. The $N_{max}$ set to $500$ represents the case where we may have more users accessing the medium than the allowed maximum. We see that the absolute estimation errors are almost the same up to $500$ active users. After this point the estimation error grows linearly with increasing number of users similar to the state of the art. On the other hand the case where $N_{max}$ is set to $2000$ represents that we always have a higher limit for maximum number of users compared to active number of users. This has a less critical effect compared to setting a lower maximum limit. This can be observed in Fig.~\ref{fig:collisionsize} where the absolute error has increased slightly but is in general lower compared to the previous case. Thus, it can be concluded that a relatively high $N_{max}$ can be selected to avoid the saturation effect.}
%The average error rate with CCP has a small dip around 500 users. This is due to the $N_{max}$ limitation of the estimator, the CCP knows there is not more than $1000$ users in the system. Thus, the range of wrong guesses decreases with increasing number of users.

\mgrev{The scalability comes with the cost of precision loss with low number of active users. Even though, the mean error difference is approximately 1 user up to 200 active users, the state of the art is better than the CCP. This is due to setting the unequal access probabilities for scalability that is enforced due to limited amount of estimation resources. }

% Since the maximum the estimator can guess is 100 users thus, the linear increase in the error.

%These results provides us the collision size estimation such that we enable stochastic analysis of access times. We now tackle the second QoS information the request of the device.

%Once we have an estimation for our collision size we can deploy an isolated resolution algorithm to give bounded resolution times. 

\mgrev{The precision of the estimation is evaluated on average. Thus, the strictness of the \mgrevv{stochastic delay constraints} provided through the use of the estimator is valid on a set of realizations, but not for each realization of the random process. Also, the \mgrevv{stochastic delay constraint} would be valid if the number of arriving users is upper-bound so the exact estimation can be converted to an upper-bound for reliability. \mgrevv{We enable this via adding the mean estimation error $\expectation[|\hat{u_i} - u_i|]$ from the analysis as a pessimism factor on top of the collision multiplicity estimate. This makes sure that the \mgrevv{stochastic delay constraint} is not violated due to estimation error.} We have evaluated the results for the guarantees where the estimator is integrated in the system in Sec.~\ref{sec:eval}.}

The outcome of the estimation and the QoS requirement is obtained from the initial access of the sensors to the admission channel. Given these information the delay of the contention tree resolution can be obtained through stochastic analysis. This information enables the admission control decision. In the following section we investigate the stochastic delay analysis of the inner protocol.

\section{AC/DC-RA - Inner Protocol}
\label{sec:inner}
%\subsection{}
In this section, we first introduce the inner protocol and quickly move on to the investigation of the stochastic analysis for delay constraints.

We deploy a version of binary tree resolution algorithm for isolated resolution of each contention. Instead of a distributed decision as usually the case for tree resolution we assume a centralized decision. In a distributed version, users select the contention resource with respect to the outcome of other contentions, i.e., with respect to the feedback. A central decision can allocate the respective resolution slot such that the user does not have to monitor the feedback continuously. The contention goes on until all users are resolved. Such a central decision requires breadth-first exploration of the tree. The number of required resources for a depth-first exploration is unbounded while for breadth-first it is deterministic and number of resources are exponential $2^m$ with tree level $m$. Another advantage of breadth-first is a possible exploration of multiple contention slots simultaneously if parallel resources exist orthogonal to time, i.e., multiple frequencies. We call this \textit{parallelization} of the resolution and $M_P$ denotes the number of parallel allocated resource for a resolution. \mgrev{An example with two possible tree algorithm parallelizations is given in Fig.~\ref{fig:parallelization}. The resolution starts with $8$ users and with the first split $3$ users select one resolution slot while the remaining $5$ select the other resolution slot.} The users are resolved with a parallelization of 2 and 4. In the case of parallelization of $M_P=2$ the resolution needs a capacity of $2$ frequencies for a duration of 4 time-slots to schedule all resolution slots. However, with a parallelization of $M_P=4$ the resolution needs a capacity of $4$ frequencies for a duration of 3 time-slots. Thus, required capacity increases since higher amount of parallel resources are blocked for faster resolution.

 Stochastic delay analysis for tree algorithms that use no parallelization can be found in \cite{molle1993conflict}. A parallelization of $Q$, the branch size, is investigated in \cite{janssen2000analysis}. A parallelization of $M_P$, an arbitrary factor, is investigated in \cite{gursu_mpcta}. Multichannel Parallel - Contention Tree Algorithm (MP-CTA) \cite{gursu_mpcta} provides analytic results for breadth-first parallelized explorations of the tree. \mgrevvv{The advantage of MP-CTA protocol compared to \cite{janssen2000analysis} is the ability to keep the throughput constant while increasing the parallelization as \cite{janssen2000analysis} sacrifices throughput for parallelization.} The delay analysis is based on parallelization of $M_P$ and it enables an efficient resolution mode selection for the required delay constraint. In our analysis the $M_P$ will map to parallel resources in the same time-slot i.e., with $M_{RC}$ resource in resolution channel we have a maximum possible parallelization of $M_P=M_{RC}$.

 \begin{figure*}[t!]
 	\centering
 	\begin{subfigure}{0.4\textwidth}
 		\begin{tikzpicture}[scale = 0.8, every node/.style={scale=0.8}]
 		\pgfdeclarelayer{background}
 		\pgfdeclarelayer{foreground}
 		\pgfsetlayers{background,main,foreground}
 		\tikzset{node style/.style={state, 
 				fill=TUMElfenbein,
 				minimum size=1.5cm,
 				circle}}
 		\tikzset{node style2/.style={state, 
 				fill=white,dashed,
 				minimum size=1.5cm,
 				circle}}
 		\node[node style] (s1) {  8 };
 		\node[node style, right of=s1,yshift=2.1cm,xshift=1.1cm] (s2) {5};
 		\node[node style, right of=s1,yshift=-2.1cm,xshift=1.1cm] (s3) {3};,
 		\node[node style, right of=s2,yshift=1.3cm,xshift=1.1cm] (s4) {2};
 		\node[node style, right of=s2,yshift=-0.8cm,xshift=1.1cm] (s5) {3};
 		\node[node style, right of=s3,yshift=1.1cm,xshift=1.1cm] (s6) {2};
 		\node[node style, right of=s3,yshift=-1.1cm,xshift=1.1cm] (s7) {1};
 		%\node[above of=s1,yshift=1.7cm,](arrow) (a){\large \textbf{}};
 		\draw (s1) edge[line width=0.02cm, auto=left,->] node {} (s2);
 		\draw (s1) edge[line width=0.02cm, auto=left,->] node {} (s3);
 		\draw (s2) edge[line width=0.02cm, auto=left,->] node {} (s4);
 		\draw (s2) edge[line width=0.02cm, auto=left,->] node {} (s5);
 		\draw (s3) edge[line width=0.02cm, auto=left,->] node {} (s6);
 		\draw (s3) edge[line width=0.02cm, auto=left,->] node {} (s7);
 		\node[right of=s4,xshift=2cm] (a3){};
 		\begin{pgfonlayer}{background}
 		\path (s1.west |- s1.north)+(-0.1,1.7) node (a) {};
 		\path (s1.south -| s1.east)+(+0.1,-2.0) node (b) {};
 		\path[fill=TUMGrau!20,rounded corners, draw=black!50, solid]
 		(a) rectangle (b);
 		\path (s1.west |- s1.north)+(1.9,3.3) node (c) {};
 		\path (s1.south -| s1.east)+(2.2,-3.3) node (d) {};
 		\path[fill=TUMGrau!20,rounded corners, draw=black!50, solid]
 		(c) rectangle (d);
 		\path (s4.west |- s4.north)+(-0.1,0.1) node (e) {};
 		\path (s5.south -| s5.east)+(0.1,-0.4) node (f) {};
 		\path[fill=TUMGrau!20,rounded corners, draw=black!50, solid]
 		(e) rectangle (f);
 		\path (s6.west |- s6.north)+(-0.1,0.1) node (g) {};
 		\path (s7.south -| s7.east)+(0.1,-0.4) node (h) {};
 		\path[fill=TUMGrau!20,rounded corners, draw=black!50, solid]
 		(g) rectangle (h);
 		\end{pgfonlayer}
 		\node[below of=f,yshift=1.2cm, xshift=-0.85cm] (layerprot) {\fontsize{8}{8}\selectfont $t=3$};
 		\node[below of=d,yshift=1.2cm, xshift=-0.85cm] (layerprot) {\fontsize{8}{8}\selectfont $t=2$};
 		\node[below of=b,yshift=1.2cm, xshift=-0.85cm] (layerprot) {\fontsize{8}{8}\selectfont $t=1$};
 		\node[below of=h,yshift=1.2cm, xshift=-0.85cm] (layerprot) {\fontsize{8}{8}\selectfont $t=4$};				
 		\end{tikzpicture}	
 		\caption{Parallelization with 2 frequencies $M_P=2$}
 	\end{subfigure}
 	\begin{subfigure}{0.4\textwidth}
 		\begin{tikzpicture}[scale = 0.8, every node/.style={scale=0.8}]
 		\pgfdeclarelayer{background}
 		\pgfdeclarelayer{foreground}
 		\pgfsetlayers{background,main,foreground}
 		\tikzset{node style/.style={state, 
 				fill=TUMElfenbein,
 				minimum size=1.5cm,
 				circle}}
 		\tikzset{node style2/.style={state, 
 				fill=white,dashed,
 				minimum size=1.5cm,
 				circle}}
 		\node[node style] (s1) {  8 };
 		\node[node style, right of=s1,yshift=2.1cm,xshift=1.1cm] (s2) {5};
 		\node[node style, right of=s1,yshift=-2.1cm,xshift=1.1cm] (s3) {3};,
 		\node[node style, right of=s2,yshift=1.1cm,xshift=1.1cm] (s4) {2};
 		\node[node style, right of=s2,yshift=-1.1cm,xshift=1.1cm] (s5) {3};
 		\node[node style, right of=s3,yshift=1.1cm,xshift=1.1cm] (s6) {2};
 		\node[node style, right of=s3,yshift=-1.1cm,xshift=1.1cm] (s7) {1};
 		%\node[above of=s1,yshift=1.7cm,](arrow) (a){\large \textbf{}};
 		\draw (s1) edge[line width=0.02cm, auto=left,->] node {} (s2);
 		\draw (s1) edge[line width=0.02cm, auto=left,->] node {} (s3);
 		\draw (s2) edge[line width=0.02cm, auto=left,->] node {} (s4);
 		\draw (s2) edge[line width=0.02cm, auto=left,->] node {} (s5);
 		\draw (s3) edge[line width=0.02cm, auto=left,->] node {} (s6);
 		\draw (s3) edge[line width=0.02cm, auto=left,->] node {} (s7);
 		\node[right of=s4,xshift=2cm] (a3){};
 		\begin{pgfonlayer}{background}
 		\path (s1.west |- s1.north)+(-0.1,3.5) node (a) {};
 		\path (s1.south -| s1.east)+(+0.1,-3.7) node (b) {};
 		\path[fill=TUMGrau!20,rounded corners, draw=black!50, solid]
 		(a) rectangle (b);
 		\path (s1.west |- s1.north)+(1.9,3.5) node (c) {};
 		\path (s1.south -| s1.east)+(2.2,-3.7) node (d) {};
 		\path[fill=TUMGrau!20,rounded corners, draw=black!50, solid]
 		(c) rectangle (d);
 		\path (s1.west |- s1.north)+(4.0,3.5) node (e) {};
 		\path (s1.south -| s1.east)+(4.3,-3.7) node (f) {};
 		\path[fill=TUMGrau!20,rounded corners, draw=black!50, solid]
 		(e) rectangle (f);
 		\end{pgfonlayer}
 		\node[below of=f,yshift=1.2cm, xshift=-0.85cm] (layerprot) {\fontsize{8}{8}\selectfont $t=3$};
 		\node[below of=d,yshift=1.2cm, xshift=-0.85cm] (layerprot) {\fontsize{8}{8}\selectfont $t=2$};
 		\node[below of=b,yshift=1.2cm, xshift=-0.85cm] (layerprot) {\fontsize{8}{8}\selectfont $t=1$};
 		\end{tikzpicture}
 		\caption{Parallelization with 4 frequencies $M_P=4$}	
 	\end{subfigure}	
 	\caption{Example of parallel exploration of trees.}
 	\label{fig:parallelization}
 \end{figure*}
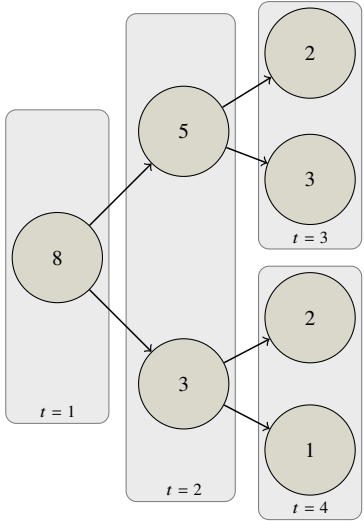
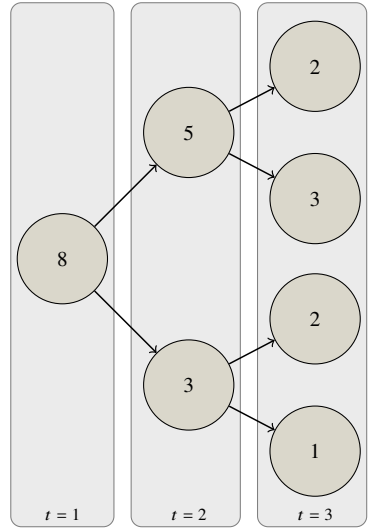
 %In the next secti

 %Thus, to solve this problem we will use the collision size estimator obtained through the outer protocol. With this information provided we move on to the resolution algorithm.

%Unless a precisely defined traffic model is available, this information is not available. And even if the traffic model is available, it is susceptible to variations. 

%\subsubsection{Coupon Collector's Problem}
%The simulation part provided that the latency CDF can be optimized by prioritizing former collisions through multi-channel trees. We have also shown that the limit decreases with increasing contention in channel since if the resources saturate the initial collisions have to wait which provides same result for multichannel or single channel tree resolution. It is important to note here that both of the cases are better compared to state of the art solutions. But the price paid is in terms of throughput. The binary tree algorithm as used here decreases the maximum achievable throughput to $0.32$ \cite{keylist}. 

\subsection{Delay Constrained Resolution}
\label{sec:mpcta}
In this section, we investigate how the analysis for the inner protocol can be used for the admission control. %First we explain the short-comings of the current resolution methods.

For a \mgrevv{stochastic delay constraint} $L$ and reliability $R$, e.g., $ R = 0.95$, means that the delay constraint $L$ should be achieved $95$ percent of the time.

Stochastic delay bounds for MP-CTA are given in \cite{gursu_mpcta} for different number of sensors. These values can be placed in a look up table (LUT) for varying $N$ number of sensors, $L$ the delay, for a specific reliability $R$ as in Tab.~\ref{tab:deadline}. The LUT then outputs the minimum number of parallelization $M_P$ required to \mgrevv{fulfill} the \mgrevv{stochastic delay constraint} of all the devices in the contention resolution. If it is infeasible then it returns zero. \mgrev{For example, given 10 devices and a delay of 5 slots, it is infeasible to achieve a resolution where all devices are resolved with $0.95$ reliability. This is denoted as $M_P =0$. However, a delay of 10 slots is achievable with a parallelization of $M_P = 3$.} %$M_P$ is the parallelization of the tree resolution such that tree algorithm is distributed over parallel frequencies in time for faster resolution in a controlled manner.

We use this analysis and define a function $f$ that outputs the number of resources $M_P$ given the required reliability and delay constraint with the number of backlogged sensors, 
\vspace{-0.2cm}
\begin{eqnarray}
	f(L_i,R_i,N)  =   \label{eqn:gra}  
	\begin{cases}
		0 & \text{if } \text{infeasible} \\
		M_P & \text{if } \text{feasible}.
	\end{cases}  	
\end{eqnarray}

Infeasibility is invoked when allocation of all the $M_{RC}$ frequencies in the resolution channel cannot achieve the required delay then $f=0$ is returned. %This can be calculated with the analytical derivation of the number of levels in a tree. 

%In order to provide an example we set three different levels of reliability $R_1 = 0.90, R_2 = 0.95, R_3 =  0.99$.

 %Since we have the full CDF there is no limitation for selecting any level of reliability. We do so for the ease of demonstration here. 
\begin{table}
\centering
\begin{tabular}{c | c c c c c c c}
  Delay Bound  & 5& 10& 15 & 20 &  25 &  30 & 35 	 \\\hline
  No. Sensors &&&&&&& \\
5  & 0 & 1 & 1 & 1 & 1 & 1 & 1 \\
10 & 0 & 3 & 2 & 1 & 1 & 1 & 1 \\
15 & 0 & 4 & 2 & 2 & 1 & 1 & 1 \\
20 & 0 & 6 & 3 & 2 & 1 & 1 & 1 \\
25 & 0 & 8 & 4 & 3 & 2 & 2 & 2 \\
30 & 0 & 9 & 4 & 3 & 2 & 2 & 2 \\
35 & 0 & 11 & 5 & 4 & 3 & 2 & 2\\
40 & 0 & 13 & 6 & 4 & 3 & 3 & 2
\end{tabular}

\caption{The parallelization $M_P$, given in table, needed to resolve certain number of backlogged sensors for varying delay bounds $L_j$ and a reliability level $R_j=0.95$. The reliability level is not a dimension of the table.}

\label{tab:deadline}
\end{table}

%In Tab.~\ref{tab:deadline} the maximum number of resolvable devices for a required access time in columns and number of frequencies $M_P$ in rows with a reliability of $R_2 = 0.95$ is given. % \mg{tables for 0.99 or 0.9 can also be added!}

\mgrev{We can check a concrete example using the values shared in Tab.~\ref{tab:deadline}\footnote{The values shared in the table are calculated using the analysis in \cite{gursu_mpcta}.}}. An example would be for a delay constraint of $15$ slots with $20$ backlogged sensors and a reliability of $0.95$ percent. \mgrev{We can read the cross-section of these values to see the required parallelization. }This can be formulated as \textbf{$f(15,0.95,20) =3$} such that we can use parallelization of $3$ to achieve the \mgrevv{stochastic delay constraint} in an efficient manner. The required parallelization is $2$ for $10$ backlogged sensors, and $4$ for $25$ backlogged sensors. Thus, we can allocate just the right number of resources to achieve the \mgrevv{stochastic delay constraint}. 

In this section we have shown that a delay constrained resolution is achievable through the MP-CTA. In the following section we explain how the information provided via the outer protocol will enable guarantees though use of the inner protocol, this leads us to the admission decision.

%Furthermore, the analysis needs the delay and reliability requirement of the collided users. In the following section we will discuss this problem.
%However, to achieve delay bounds the specific tree resolution required the collision multiplicity that we have discussed in the Adaptation section. 

%It should be emphasized that the use of parallel tree resolution is a choice of flavor and is not a must. However, it is suggested for high throughput resolution, possible parallelization and stochastic guarantees for resolution time. 
%As given in the previous explanations, the analysis is only enabled through \textit{QoS} information and number of active devices on a frequency which we will refer also as \textit{collision size information}. We will discuss the latter in the following section.

 %Since the guaranteed access is aware of the deadline through QoS knowledge and has the collision size thanks to the estimation, than it can check if different parallelization can enable guaranteeing the deadline with the requested reliability. If this is not the case the user is rejected in order to make space for other users. 

\section{AC/DC-RA - Admission Control}
\label{sec:admission}
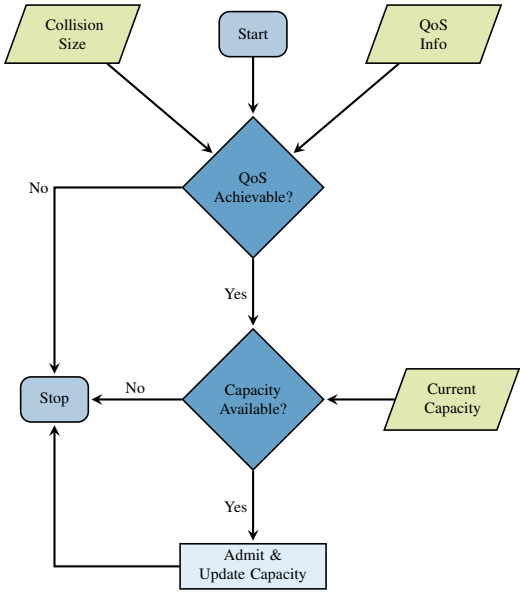
\begin{figure}[t!]
	\centering
	\begin{tikzpicture}[->,>=stealth',shorten >=1pt,auto,node distance=2.4cm,
	semithick,scale = 0.6, every node/.style={scale=0.6}]
	\pgfdeclarelayer{background}
	\pgfdeclarelayer{foreground}
	\pgfsetlayers{background,main,foreground}
	\tikzstyle{startstop} = [rectangle, rounded corners, minimum width=1.5cm, minimum height=1cm,text centered, draw=black, fill=TUMBlauDunkel!30]
	\tikzstyle{io} = [trapezium, trapezium left angle=70, trapezium right angle=110, minimum width=3cm, text width=1.2cm, minimum height=1cm, text centered, draw=black, fill=TUMGruen!30] % 
	\tikzstyle{process} = [rectangle, minimum width=1cm, , text width=3cm, minimum height=1cm, text centered, draw=black, fill=TUMBlauHell!30]
	\tikzstyle{decision} = [diamond, minimum width=3cm, minimum height=1cm, text centered, draw=black, fill=TUMBlauMittel, text width=2cm,]
	\tikzstyle{arrow} = [thick,->,>=stealth]
	%	\tikzstyle{every state}=[fill=blue,draw=none,text=white]
	%	
	%	\node[state] (A)                    {$Fail$};
	%	\node[state]         (E) [right of=A]       {$Tx$};
	%	\node[initial, state]         (B) [above of=E] {$Off$};
	%	\node[state]         (D) [below of=E] {$A_{R}$};
	%	\node[state]         (C) [right of=E] {$Suc$};
	%	%\node[state]         (F) [below left of=C] {$A_{S}$};
	%	
	%	
	%	\path (A) edge   [bend left]            node {$1$} (B)
	%	(B) edge [loop above] node {$1-p_{on}$} (B)
	%	edge              node {$p_{on}$} (E)
	%	(C) edge   [bend right]            node {$1$} (B)
	%	%(F) edge   [bend right=40] node {$1-p_s$} (C)
	%	%edge   [bend left=90]            node {$p_s$} (A)
	%	(D) edge [bend left]  node {$p_r$} (A)
	%	edge              node {$1-p_r$} (C)
	%	(E) edge   node {$1-p_c$} (C)
	%	edge             node {$p_{c}$} (D);
	%\node (adm) [decision] {Activate};
	\node (start) [startstop] {Start};
	\node (io1) [io, right of=start,xshift=+1.6cm] {QoS Info};
	\node (io2) [io, left of=start, ,xshift=-1.6cm] {Collision Size};
	\node (adm1) [decision, below of=start,yshift=-1.0cm ] {QoS \\ Achievable?};
	\node (adm2) [decision, below of=adm1, yshift=-2.3cm ] {Capacity Available?};
	\node (io3) [io, right of=adm2,xshift=2cm] {Current \\ Capacity};
	\node (proc1) [process, below of=adm2,yshift=-1.3cm] {Admit \& \\ Update Capacity};
	\node (stop) [startstop, left of=adm2,xshift=-2cm ] {Stop};
	%\begin{pgfonlayer}{background}
	% Compute a few helper coordinates
	%\path (AC1.west |- AC1.north)+(-0.2,0.1) node (a) {};
	%\path (IMU.south -| IMU.east)+(+0.2,-0.2) node (b) {};
	%\path[fill=green!20,rounded corners, draw=black!50, dashed]
	%(a) rectangle (b);
	
	%\end{pgfonlayer}
	\draw [arrow] (start) -- (adm1);
	\draw [arrow] (io1) -- (adm1);
	\draw [arrow] (io2) -- (adm1);
	\draw [arrow] (io3) -- (adm2);
	\draw [arrow] (adm1) --node[anchor=east] {Yes}  (adm2);
	\draw [arrow] (adm1) -|node[anchor=east] {No}  (stop);
	\draw [arrow] (adm2) --node[anchor=south] {No}  (stop);
	\draw [arrow] (proc1) -| (stop);
	\draw [arrow] (adm2) --node[anchor=east] {Yes}  (proc1);
	%\tikzstyle{text}=[draw, fill=white!, text width=22em, text centered, minimum height=2.5em]
	\end{tikzpicture}
	
	\caption{Admission control decision state diagram}
	
	\label{fig:flow_diagram_admission}
\end{figure}

%Inheritance of the QoS information with the number of collided users enable the Delay Constrained Adaptive Random Access. 
AC/DC-RA is not improving the throughput of random access but limiting delay for a resolution. Thus, dealing with increasing number of users is still an issue. In order to investigate the scaling problem, we have to consider the capacity of the Resolution Channel.

We define \textit{capacity} as a set of resolution resources. \mgrevvv{Each collision needs different set of resources. Thus, we have to distribute the capacity in an efficient manner. Moreover, with increasing number of users in a collision we cannot scale resources in time but only in frequency as we are dealing with a delay constraint. Thus, having frequencies available is a deciding factor for whether we can resolve a collision before the delay-constraint.} \par The resources for resolution is fixed in terms of frequency and time. For instance, the \textbf{capacity} required to resolve a collision given in Fig.~\ref{fig:srawg} is a $2$ frequency $5$ time-slot grid. The $2$ frequencies are blocked for $5$ time-slots. The capacity of the Resolution Channel is also defined in terms of frequencies $M_{RC}$. It is clear that not all collisions will fit in the RC. Thus, to guarantee that users admitted to the system are always served within the \mgrevv{stochastic delay constraint}, we have to reject some of the users. The decision whether to reject the users or to admit them to RC is done by the admission control of AC/DC-RA. 

\subsection{Admission Control}
The admission is decided through evaluation of QoS information, collision size information against the resolution channel capacity. We zoom in the admission block from Fig.~\ref{fig:flow_diagram}. We provide another flow diagram for the admission control decision in Fig.~\ref{fig:flow_diagram_admission}. The \textbf{QoS information} is extracted in terms of $L_j$ and $R_j$ from the selected admission channel index $j$. The {collision size} estimation returns the vector $\hat{\mathbf{u}}$ where $\hat{u_i}$ is the \textbf{collision size estimation} for the $i^\text{th}$ resource. \mgrevv{We add the mean estimation error for the expected $N_{max}$, calculated with $\expectation[|\hat{u_i} - u_i|]$ as a pessimism factor to each collision multiplicity estimation that gives $\hat{u_i}^\dagger$. }\mgrev{As the realization $N$ is unknown, the estimation $\hat{u_i}^\dagger$  has to be used in this case for the \mgrevv{delay constrained} resource allocation calculation.} The admission control feeds this information to the stochastic tree analysis $f(L_j, R_j, \hat{u_i}) = M_{P_{j,i}}$ to obtain the number of required resources. In case the QoS is not achievable, i.e., $M_{P_{j,i}}=0$, the devices are directly rejected. If not, the requested number of resources are compared against the available number of resources in RC. If there is enough capacity the users are let into the system for resolution or else are rejected. The admission decision $D_{j,i}$ that is given for all users in resource $i$ of the admission channel $j$ can be summarized as in,
\vspace{-0.2cm}
\begin{eqnarray}
D_{j,i}  =   \label{eqn:decision}  
\begin{cases}
\text{Reject} & \text{if } M_{P_{j,i}} =0 \text{ or }  M_{P_{j,i}} > M_{RC}^t  \\
\text{Accept} & \text{if } M_{P_{j,i}} <= M_{RC}^t,
\end{cases}  	
\end{eqnarray}
where $M_ {RC}^t $ is the number of available resources in the resolution channel at time-slot $t$ and is updated as $M_{RC}^t \leftarrow M_{RC}^t - M_{P_{j,i}}$ after an accept decision. It is initialized as $M_{RC}^t =M_{RC}$ and after each resolved contention, the freed resources are added back.

\mgrevvv{Each sensor is aware of the indices of its selected resource denoted with $i$ for resource and $j$ for the admission channel. The admission decision and the resolution resources are broadcast with attaching these two indices to the decision message, such that each sensor can deduce which resources it can use for the resolution.}

 %An important remark here is that the admission decision can also be done partially. Not all the users have to be rejected, a stochastic acceptance can be added in order to guarantee request for a part of users. %through a blocking probability similar to ACB. 

%\subsection{Scheduling Grant}

%We do not cover the scheduling grant part completely but we assume enough capacity for the scheduling. A queuing system can be used here such that if there is enough capacity in SC schedule access is granted and rejected otherwise. However, this is not covered within the scope of this paper.

%\cite{keylist}
%Assuming that infinite number of frequencies are available for parallelization of resolution can always guarantee a feasible QoS.

The system will operate in a resource limited environment such that allocation of resources to admission channel and resolution channel will impact the behavior of the system. In order to analyze this trade-off we propose an analytical model. %Through this analysis we aim to derive a tractable dimensioning problem.% for request-aware access.

\section{Analysis}

\label{sec:analysis}
\mgrevvv{We foresee that the number of channels can be adjusted with respect to the incoming traffic. In order to analyze the effect of selecting certain number of admission channels ${M}_{AC}$ versus resolution channels ${M}_{RC}$ we} propose a Markov Chain model as given in Fig.~\ref{eq:state_machine}. We simplify the system to five different states. Initial state is an \textit{Off} state that represents the device activation characteristics with respect to the application. When active with the probability $p_{on}$, the sensor goes to the transmission state \textit{Tx}. This state is the initial access state, and the sensor selects one of the resources, $i$, in the $j^{\text{th}}$ admission channel $\mathcal{M}_{AC_j}$ and \mgrev{transmit a packet with that resource}. This selection is done on the appropriate admission channel for QoS class.

\par The initial access is a success with probability $1-p_c$. Then the sensor may go to success state \textit{Suc}. After the transmission is completed it goes back to \textit{Off} state. If the initial access results in a collision it goes to the admission state \textit{$A_R$} with probability $p_c$. In this state the number of collided sensors with that specific sensor is estimated and a decision whether resolution time is within QoS class of the sensor is given. 

\par After initial access, the sensor is admitted with probability $1-p_r$. After successful contention resolution it proceeds to the \textit{Suc} state. If the sensor cannot be admitted then it is rejected with probability $p_r$ and goes to the fail state \textit{Fail} where it informs higher layers before going to the \textit{Off} state. 

\begin{figure}[t!]
	\centering
\begin{tikzpicture}[->,>=stealth',shorten >=1pt,auto,node distance=3.2cm,
semithick,scale = 0.8, every node/.style={scale=0.8}]
\tikzstyle{every state}=[fill=TUMBlau,draw=none,text=white]

\node[state] (A)                    {$Fail$};
\node[state]         (E) [right of=A]       {$Tx$};
\node[initial, state]         (B) [above of=E] {$Off$};
\node[state]         (D) [below of=E] {$A_{R}$};
\node[state]         (C) [right of=E] {$Suc$};
%\node[state]         (F) [below left of=C] {$A_{S}$};

\path (A) edge   [bend left]            node {$1$} (B)
(B) edge [loop above] node {$1-p_{on}$} (B)
edge              node {$p_{on}$} (E)
(C) edge   [bend right]            node {$1$} (B)
%(F) edge   [bend right=40] node {$1-p_s$} (C)
%edge   [bend left=90]            node {$p_s$} (A)
(D) edge [bend left]  node {$p_r$} (A)
edge              node {$1-p_r$} (C)
(E) edge   node {$1-p_c$} (C)
edge             node {$p_{c}$} (D);
\end{tikzpicture}

\caption{Markov Chain for AC/DC-RA}

\label{eq:state_machine}
\end{figure}
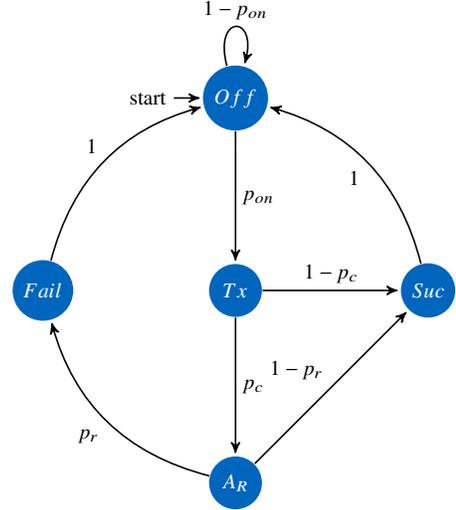

%We first extract the interdependency relations of the states as,

% \begin{align}
%P_{Off}  & = \frac{P_{Fail}+P_{Suc}}{p_{on}} \\
%P_{Tx} &  = P_{Off} p_{on} \\
%P_{A_{R}} & = P_{Tx} p_c \\
%P_{Suc} & = (1-p_r) P_{A_{R}} +(1-p_c) P_{Tx} \\
%%P_{Suc} & = (1-p_s)P_{A_{S}} \\
%P_{Fail} & = p_r P_{A_{R}} \\
%1 &= P_{Off}  + P_{Fail} + P_{Suc} + P_{A_{R}} +P_{Tx}.
%\label{eq:markov_states}
%\end{align}

%We then use the extra equation such that summation of all the states adds up to one 

% \begin{equation}

% 	\label{eq:markov_law_tot_prob}
% \end{equation}
We can extract the state probabilities in terms of state transition probabilities as,

\begin{align}
P_{Off}  & = \frac{1}{1 + 3p_{on}+ p_{on}p_c\left(1 - p_r \right)} \\
P_{Tx} &  = \frac{p_{on}}{1 + 3p_{on}+ p_{on}p_c\left(1 - p_r \right)} \\
P_{A_{R}} & = \frac{p_c p_{on}}{1 + 3p_{on}+ p_{on}p_c\left(1 - p_r \right)} \\
%P_{A_{S}} & = \frac{\left(1-p_r p_c\right) p_{on}}{1 + 3p_{on}+ p_{on}p_c\left(1 - p_r \right)} \\
P_{Suc} & = \frac{\left(p_r p_c - p_c p_r\right) p_{on}}{1 + 3p_{on}+ p_{on}p_c\left(1 - p_r \right)} \\
P_{Fail} & = \frac{\left(1-p_r p_c + p_c p_r\right) p_{on}}{1 + 3p_{on}+ p_{on}p_c\left(1 - p_r \right)} .
\label{eq:markov_tran_prob}
\end{align}

%Through this we want to investigate the failure $P_{Fail}$ as in  
%
%\begin{equation}
%	P_{Fail} = \left(p_r p_c + p_s(1-p_r p_c) \right)\cdot \frac{ p_{on}}{1 + 3p_{on}+ p_{on}p_c},
%\end{equation}
%
%such that if the rejection probability for scheduling and random access are decreased to zero, no failure will occur due to randomness of the resolution channel. In case scheduling and random access are independently optimized, one can argue that $p_c$ can be optimized rather than $p_r$. On average this is true however, collision probability never decreases to zero unless a low arrival rate is considered. This can be seen on a Multichannel-Slotted ALOHA with $\lambda$ mean arrival rate per channel

%\begin{equation}
%	p_c = 1 - e^{-\lambda} - \lambda e^{-\lambda}.
%\end{equation}
  
%Obviously $\lambda$ may be decreased with putting more channels such that the load is distributed. However, since $p_r$ is a system parameter based on the capacity it can lead to better dimensioning to reach the required reliability.

%In state of the art there are many different work focusing on this \cite{keylist}.

We investigate the state transition probabilities as follows: The activation probability depends on the application. For the sake of steady state analysis we consider Poisson arrivals in this scenario, which is usually assumed for sensors \cite{demirkol2009impact} \mgrevv{ and \cite{3GPP_Poisson}.}
%, but emphasize that it is done to give a simple dimensioning perspective
To provide an average dimensioning we assume the probability that a device generates any packet between two random access opportunities, and the total mean arrival rate as $\lambda$ with activation probability $p_{on} = 1 - e^{-\lambda}$.

\subsection{Collision Probability $p_c$ }
For the calculation of the collision probability we cannot use typical Multichannel Slotted ALOHA models since we have modified resource selection probabilities. We solve this through modeling the problem as a bins and balls problem. The bins represent the sensors and the balls represent the resources in admission channel.
\mgrevvv{
\begin{thm}
Probability to have exactly $u$ balls out of $N$ balls in any of the $M$ bins, where each bin $i$ have an unequal probabilities $p_i$ to be land on by a ball, can be given by,
\begin{equation}
	p_c (u) = \sum_{j=1}^{J} \sum_{x \in \mathbf{S}^J_x} \left(  P_{J}^x (u) \cdot j \right),
	\label{eq:coll2}
\end{equation}
where $J$ is the maximum number of $u$-ball-groups that can be formed out of $N$ balls given there are $M$ bins. The probabilities $P_{J}^x (u)$ to have a ball-to-bin partition with $u$ balls are summed separately from one, up to and including $J$ bins with $u$ balls and weighed accordingly.
\label{thm:1}
\end{thm}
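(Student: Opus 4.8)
The plan is to cast the initial access as a classical balls-in-bins experiment with unequal bin probabilities: the $N$ contending sensors are the balls, the $M$ resources of the admission channel are the bins, and each ball is thrown independently into bin $i$ with probability $p_i$ (so $\sum_{i=1}^{M}p_i=1$, as in Eq.~\eqref{eq:ccp_const}). Under this identification, ``bin $i$ holds exactly $u$ balls'' is precisely the event that resource $i$ is picked by exactly $u$ sensors, i.e., a multiplicity-$u$ collision on that resource, and $p_c(u)$ is obtained by aggregating this event across the $M$ bins. The first step is to condition on the number $j$ of bins that end up with exactly $u$ balls: each such bin consumes $u$ of the $N$ balls and there are only $M$ bins, so necessarily $j\le J:=\min(\lfloor N/u\rfloor,M)$, and for larger $j$ the probability vanishes, which pins down the outer summation range. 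For a fixed $j$, the relevant configurations are indexed by the $j$-element subsets of $\{1,\dots,M\}$ specifying which bins carry $u$ balls; this family is the set $\mathbf{S}^J_x$ in the statement, and the subsets must be kept individually rather than merely counted, since the $p_i$ are not equal.

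Next I would evaluate $P_J^x(u)$, the probability of a given configuration $x=\{i_1,\dots,i_j\}$. Since the balls are distinguishable and thrown independently, this is a multinomial product: choose and place $u$ balls in bin $i_1$ at cost $\binom{N}{u}p_{i_1}^{u}$, then $u$ balls in $i_2$ at cost $\binom{N-u}{u}p_{i_2}^{u}$, and so on through the $j$ marked bins, leaving $N-ju$ balls to be distributed over the remaining $M-j$ bins subject to the constraint that \emph{none} of those bins attains exactly $u$ balls. This residual constraint is the one genuinely delicate step: without it the leftover mass would simply be $\bigl(\sum_{\ell\notin x}p_\ell\bigr)^{N-ju}$, and with it I would discharge the correction by inclusion--exclusion over the leftover bins (subtract the arrangements in which one further bin reaches $u$, add back those in which two do, and so on), or, more cleanly, by observing that the same decomposition recurses on the sub-experiment over the $M-j$ residual bins and yields $P_J^x(u)$ directly.

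Finally I would reassemble the pieces: summing $P_J^x(u)$ over all $x\in\mathbf{S}^J_x$ gives the probability that exactly $j$ bins hold $u$ balls, and weighting by $j$ and summing over $j=1,\dots,J$ gives $\sum_{j=1}^{J}\sum_{x\in\mathbf{S}^J_x}\bigl(P_J^x(u)\cdot j\bigr)$, which by the elementary identity $\sum_j j\,\Pr[K=j]=\Exp{K}$, with $K$ the number of bins holding $u$ balls, is exactly Eq.~\eqref{eq:coll2}. I expect the inclusion--exclusion bookkeeping for the residual $M-j$ bins to be the main obstacle, together with verifying that the resulting $j$-weighted double sum collapses to the stated closed form; the remaining work --- the multinomial algebra and the check of the cut-off $J=\min(\lfloor N/u\rfloor,M)$ --- is routine.
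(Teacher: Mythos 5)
Your proposal follows essentially the same route as the paper's proof in App.~\ref{sec:app3}: condition on the subset of bins holding exactly $u$ balls, build the probability of a marked configuration as a sequential product of binomial factors $\binom{N-u(y-1)}{u}p_i^u$ (the paper's $W^L_{x,y}$), correct the ``at least these bins'' event down to ``exactly these bins'' by inclusion--exclusion over supersets (the paper's recursive subtraction in its definition of $P_J^x(u)$), and finally weight by the occurrence count $j$ and sum, which both you and the paper recognize as the expected number of bins with $u$ balls. The only cosmetic difference is that your residual mass $\bigl(\sum_{\ell\notin x}p_\ell\bigr)^{N-ju}$ is the cleaner way of writing what the paper encodes in its $Z^J_{x,y}$ factor; the argument is otherwise the same.
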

\begin{proof}
	The detailed proof is given in App.~\ref{sec:app3}
\end{proof} 
}Thus, we can calculate the probability of a collision $p_c$ as, $
	p_c = 1 - p_c(1) -p_c(0)$. After the Transmission State then we move to the Admission State.
%Then we investigate the collision probability $p_c$. This is slightly more complicated, since we use modified resource selection probabilities for estimation purposes. We have to have an average probability of collision as in

%\begin{equation}
%	p_c = 1- \left(\sum_{i=1}^{M_{j_k}} {n_a \choose 1}p_{R_i} \left( 1 -p_{R_i}  \right)^{n_a -1} \right) \cdot \frac{1}{n_a},
%\end{equation}
%where $n_a$ is the average number of active users at the same time. This can be approximated through $n_a = p_{on} N $, where $N$ is the total number of users. \mg{check}. 

%Then, we come to the most important transition probability: the resolution rejection probability.

%Random Access Queue (RAQ)
\subsection{Admission Rejection Probability $p_r$}

\mgrev{Collisions are resolved with the tree algorithm. Each of these resolutions occupy $M_P \cdot L$ resources where $M_P$ is selected with respect to the number of collided users and $L$ is the delay constraint in terms of time-slots. As we have finite resources in our system, allocating resources to the resolutions can be considered as a serving process. Thus, }we model the serving of a resolution as a queue, Random Access Queue (RAQ), process, where each resolution resource is a server and arrivals are collisions to be served. It is a queue with no buffer since the admission decision is given instantly. \mgrev{In this section, we investigate the RAQ model in order to analytically provide the blocking probability in such a queue, that will be representing an admission rejection probability $p_r$ decision due to insufficient amount of resources in RC.}

We model each collision as an arrival to the RAQ. Since we expect a collision on all resources to use admission channel effectively, the average number of collisions can be written as $\lambda_{RAQ} = M_{{AC}_j} $ for class $j$. Thus, on heavy load, we expect a collision on all AC resources, i.e., deterministic arrivals. With low load, we expect probabilistic number of collisions thus, a Markovian number of arrival to the RAQ. 

%\begin{equation}
%	E[p_c] = M_{j_i} p_c ,
%\end{equation} 

%where $M_{j_i}$ is the number of allocated preambles to a certain class. This we can use as the mean arrival to the system such that $\lambda_{RAQ} = E[p_c] $. 

%The serving time of each user depends on the multiplicity of the users involved in the collision and the algorithm used for resolution. 

In order to guarantee the resolution time we reserve frequencies during the contention resolution. This is necessary for modeling each resolution with a serving time. For serving time we have a deterministic value $
h_{RAQ_j} = L_j,
$
such that the serving time for each QoS class depends only on the delay constraint. \mgrev{The number of available resources is converted to the number of servers. The parallelization of the resolution is governed by the collision size.} We can calculate the expected level of parallelization as in, $\Exp{M_P}  = \sum_{u = 0}^{N_{max}} f(L_j,R_j,u) P_c(u),$ where $P_c(u)$ is the probability that a collision with size $u$ occurs and given with Eq.~\eqref{eq:coll2}. Thus, each resolution needs on average $M_P$ resources. And we have $M_{RC}$ resources in total. Via dividing the total number of resources to the average number of resources per resolution we can calculate the expected number of on-going resolutions as $M_G =\left\lfloor \frac{M_{RC}}{\Exp{M_P}} \right\rfloor$. $M_G$ represents the average number of servers in the resolution channel. %We can approximate it using the resource selection probabilities in a different way,
This leads us to the admission rejection probability that can be written as the RAQ blocking probability.
\mgrevvv{
\begin{thm}
The RAQ blocking probability, given there are $M_G$ servers, the deterministic serving time of $L_j$ and $M_{AC_j}$ arrivals per slot is 
\begin{equation}
	 p_{r} = \frac{\frac{\left(L_j \cdot M_{AC_j}\right)^{M_G}}{{M_G}!}}{ \sum_{o=1}^{{M_G}} \frac{\left(L_j \cdot M_{AC_j}\right)^o}{o!}}.
	  \label{eq:pr}
\end{equation}
Here, we have projected that, if Markovian number of arrivals and deterministic number of servers swap behavior such that there are Markovian number of servers and deterministic arrivals, the same blocking probability can be used.
\end{thm}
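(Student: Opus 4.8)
The plan is to recognise Eq.~\eqref{eq:pr} as the Erlang loss (Erlang-B) formula for the RAQ seen as a multi-server queue with no waiting room: $M_G=\left\lfloor M_{RC}/\Exp{M_P}\right\rfloor$ servers, arrival rate $\lambda_{RAQ}=M_{AC_j}$ collisions per slot, and deterministic service time $h_{RAQ_j}=L_j$. I would first prove it for the reference Markovian loss system and then transfer the expression to our (deterministic) setting via the insensitivity property of the Erlang-B formula, together with the arrival/service duality the theorem statement already flags.

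First I would analyse the reference M/M/$M_G$/$M_G$ system: let $n\in\{0,1,\dots,M_G\}$ count the busy servers, so that $n$ evolves as a birth--death chain with up-rate $\lambda_{RAQ}$ from every non-full state and down-rate $n\mu$ from state $n$, where $\mu=1/h_{RAQ_j}$. Solving the detailed-balance equations $\lambda_{RAQ}\,\pi_{n-1}=n\mu\,\pi_n$ gives the truncated Poisson law $\pi_n=\pi_0\,A^n/n!$ with offered load $A=\lambda_{RAQ}/\mu=\lambda_{RAQ}\,h_{RAQ_j}$, and normalisation fixes $\pi_0$. By the PASTA property an arriving collision finds the system in state $n$ with probability $\pi_n$, hence is blocked with probability $\pi_{M_G}=B(M_G,A)=\bigl(A^{M_G}/M_G!\bigr)\big/\sum_{o=0}^{M_G}A^o/o!$. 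Substituting the RAQ parameters $\lambda_{RAQ}=M_{AC_j}$ (a full set of collided admission-channel resources per slot in heavy load) and $h_{RAQ_j}=L_j$ yields $A=L_j\,M_{AC_j}$, which reproduces the numerator and the shape of the denominator of Eq.~\eqref{eq:pr}.

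Two gaps then remain, and these are where the argument becomes delicate. First, the service time is deterministic rather than exponential; here I would invoke the classical insensitivity of the Erlang loss system — the stationary blocking probability of an M/G/$c$/$c$ queue depends on the service-time distribution only through its mean — so the M/D/$M_G$/$M_G$ blocking probability is still $B(M_G,L_j M_{AC_j})$. Second, and this is the step I expect to be the main obstacle, the arrival stream is itself deterministic (one potential arrival per admission-channel resource per slot in heavy load), so PASTA does not literally hold and the D/D/$M_G$/$M_G$ (or D/M/$M_G$/$M_G$) blocking probability is not exactly Erlang-B. Following the remark after the theorem, I would argue by the symmetry of the loss formula in the arrival and service roles: exchanging the (deterministic) arrival epochs with the (now treated as Markovian) server-completion epochs maps the system to one to which the insensitive Erlang-B expression applies, so the same formula is used as an approximation. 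Finally I would note that the displayed denominator runs from $o=1$ rather than $o=0$ because the derivation is carried out in the heavy-load regime in which the RAQ is essentially never idle, so the empty-system term is dropped; this should be stated explicitly rather than left implicit.
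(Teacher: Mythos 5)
Your proposal is correct and follows essentially the same route as the paper, whose entire ``proof'' is a one-line deferral to the call-blocking (Erlang loss) probabilities of Gimpelson's analysis: you have simply expanded that citation into the standard M/M/$M_G$/$M_G$ birth--death derivation, invoked insensitivity for the deterministic service times, and treated the deterministic-arrival issue with the same arrival/service ``swap'' heuristic the theorem statement itself admits is only a projection, so you are no less rigorous than the source. Your closing observation that the denominator runs from $o=1$ rather than $o=0$ (so the printed formula is not literally Erlang-B unless the empty-system term is deliberately dropped) is a genuine discrepancy worth flagging, and your heavy-load reading of it is the most charitable one available.
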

\begin{proof}
	We leave the proof to the reader using the call blocking probabilities in \cite{gimpelson1965analysis}.
\end{proof}}
 Finally, we have all the parameters required to analyze the protocol. 
 
%\begin{equation}
%\Exp{G}  =\frac{1}{M} \cdot \sum_{i=1}^{M} f(L_i,R_i,p_{R_i} n_a),
%\end{equation}
%
%
%\begin{equation}
%\label{eq:serv_steady_approx}
%h_{RAQ} = \frac{1}{M} \cdot \sum_{i=1}^{M} g[p_{R_i} n_a],
%\end{equation}
%where we calculate on average how many users there will be, using a similar technique as the estimation is doing to calculate the average parallelization for each resolution. This is a rough approximation and we want to investigate it deeper to provide a more concrete result. This is a similar problem to have $N_c$ balls in any of the bins while throwing $N$ balls to $M$ bins. %Please remark that the $g[.]$ returns 0 for success and empty preambles thus the averaging over ${M}$ is valid and we donot need to consider the collisions separately.

\section{Evaluation}
\label{sec:eval}
In this section we first evaluate the suggested algorithm in a prioritization scenario. Following this we compare our analysis with simulation results to show that the analysis provides a reasonable estimate to enable analytic dimensioning of the system. All the simulations are done in a MATLAB based discrete time simulator. %We refer to our suggested protocol as Spontaneous Random Access with Guarantees (SRAwG). %to guarantee stable performance.

\mgrev{We want to share certain relevant parameters considered in the simulator. We assume zero propagation time. We have implemented a collision channel model based simulator on MAC layer and perfect channel conditions are assumed. We assume costless and immediate feedback which is necessary for both tree and access barring based solutions. } \mgrevv{We investigate a single cell scenario for uplink traffic. We assume resources are organized in time and frequency.}

\subsection{Comparison with Analysis}
%We believe that such analysis will enable right dimensioning of the protocol. 

\begin{figure}[!t]
	\centering
	\includegraphics[width=0.33\textwidth]{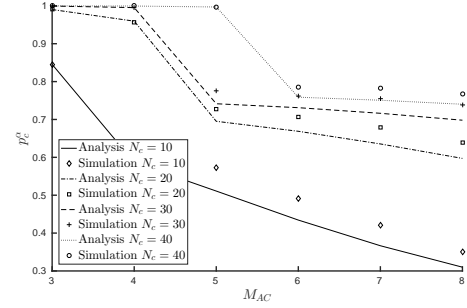}
	\caption{Evaluation of $p_c^\alpha$ with varying the resources in admission channel $M_{{AC}}$, Poisson arrivals with distinct means of $N_c = 10, 20, 30, 40$ users are evaluated.}
	\label{fig:ana_sim_comp_pc}
\end{figure}

\mgrev{We investigate the behavior of the protocol with various resource separation decisions and to show} 
validity of the analysis we simulate the AC/DC-RA with varying number of resources for admission channel $M_{AC}$ and resolution channel $M_{RC}$ and compare with our analysis. While varying the size of one channel we fix the other to $\{15,25,45\}$. We assume a Poisson arrival rate with average of $30$ active users per time-slot.

\par We compare the analysis of number of arriving collisions with simulations in Fig.~\ref{fig:ana_sim_comp_pc} where we plotted the varying number of resources in admission channel $M_{AC}$ against the collision probability that is normalized with respect to the mean Poisson arrivals. Since in simulations we use Poisson arrivals, we use the law of total probability over the probability of observing different number of devices as $
p_c^{\alpha }(N_c)  = \sum_{i = 0 }^{\infty} e^{-N_c} \frac{\left(N_c \right)^i}{i !}  p_c (i) 
$ where $p_c^{\alpha }(N_c)$ is the probability adjusted for Poisson arrivals with mean $N_c$. 

%\begin{figure}[!t]
%	\centering
%	\includegraphics[width=0.7\textwidth]{ana_sim_pbraq_mj.eps}
%	\caption{Evaluation of SRAwG with varying number of size of resolution $M_k$ and admission channel $M_j$, deadline is fixed to eight slots $L = 8$ with a reliability of $R = 0.95$.}
%	\label{fig:ana_sim_comp}
%\end{figure}
\begin{figure}[!t]
	%	\begin{subfigure}[b]{0.5\textwidth}
	%		\centering
	%		\includegraphics[width=1\textwidth]{ana_sim_pbraq_mj_n3.eps}
	%		\caption{$M_x = 3$}
	%		\label{fig:ana_sim_pbraq_mj_n3}
	%	\end{subfigure}
	%	\begin{subfigure}[b]{0.49\textwidth}
	%		\centering
	%		\includegraphics[width=1\textwidth]{ana_sim_pbraq_mj_n15.eps}
	%		\caption{Fix one to $15$}
	%		\label{fig:ana_sim_pbraq_mj_n15}
	%	\end{subfigure}
	\begin{subfigure}[b]{0.44\textwidth}
		\centering
		\includegraphics[width=1\textwidth]{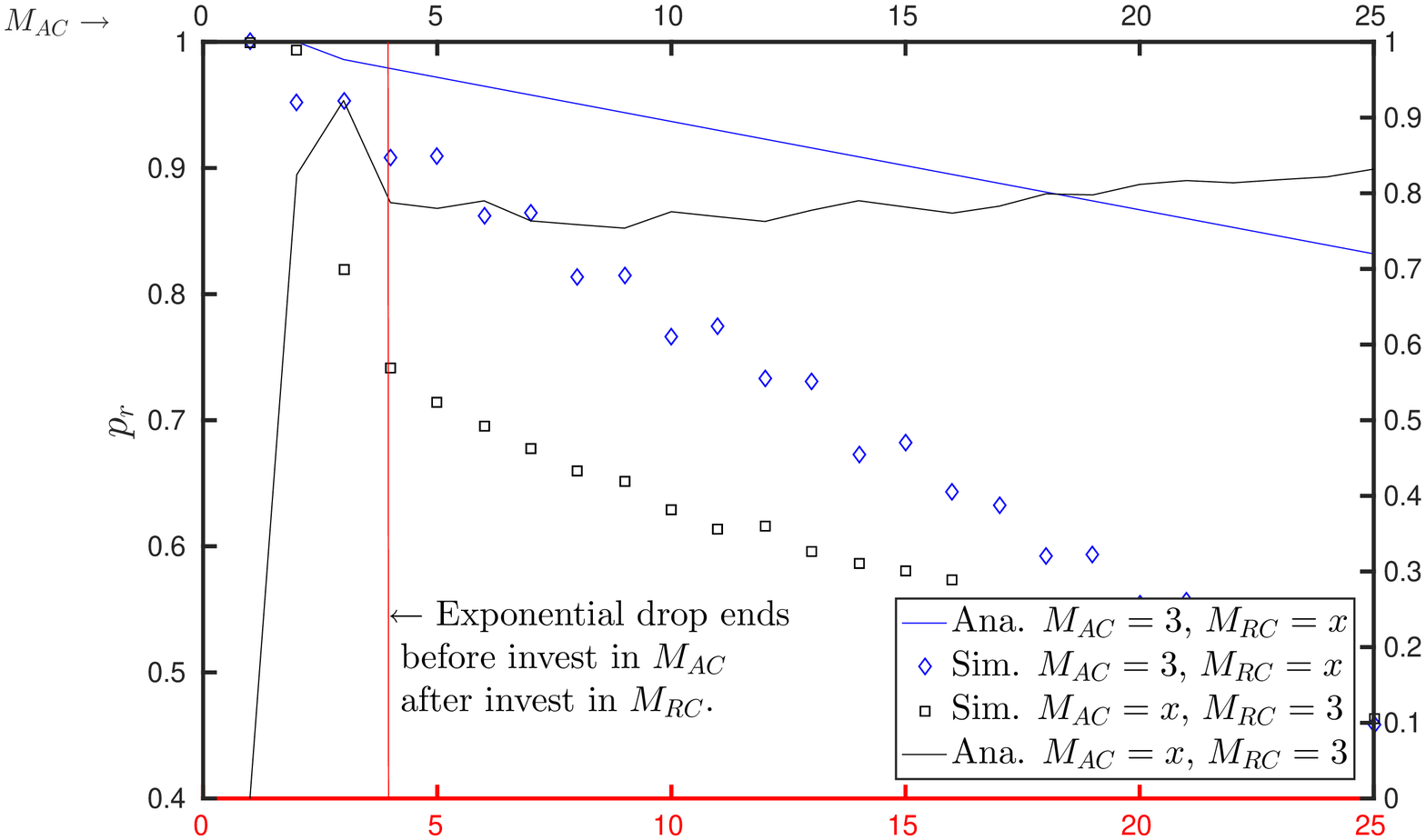}
		\caption{Fix one to $25$}
		\label{fig:ana_sim_pbraq_mj_n25}
	\end{subfigure}
	\begin{subfigure}[b]{0.44\textwidth}
		\centering
		\includegraphics[width=1\textwidth]{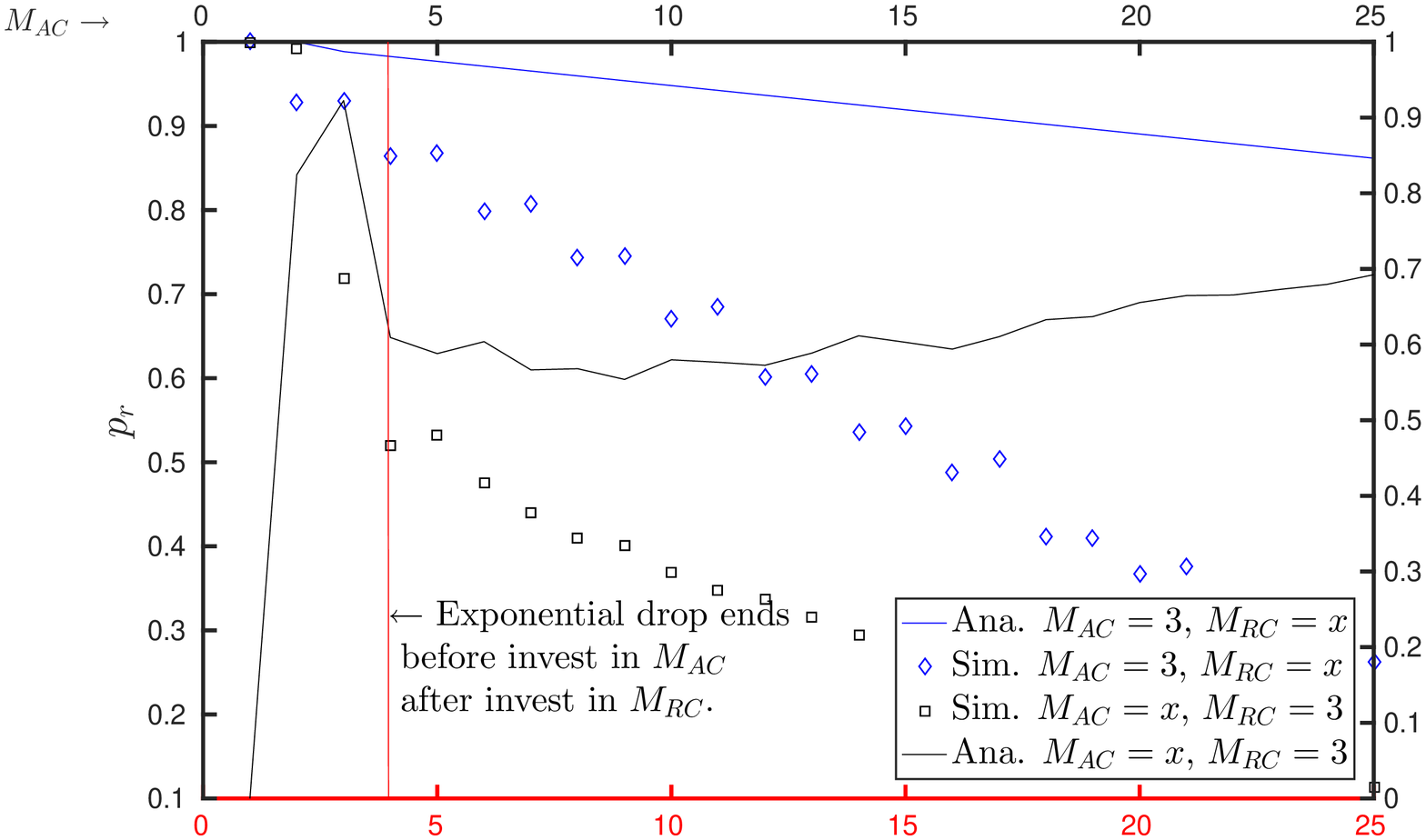}
		\caption{Fix one to $45$}
		\label{fig:ana_sim_pbraq_mj_n45}
	\end{subfigure}
	\caption{Comparison of AC/DC-RA analysis with simulations varying the amount of allocated resources to $M_{AC}$ and $M_{RC}$ with average 30 users per slot.}
	\label{fig:ana_sim_comp}
\end{figure}

We see that the with the pre-selected resource selection probabilities we can trace the collision probabilities with the given analytics. It is important to emphasize that since the complexity of the calculation grows exponentially it can only be used for offline dimensioning of the algorithm. Another observation is that the power series has a higher success rate than expected when it comes to sacrificing the throughput on the admission channel. For instance with 4 resources and 10 devices only, $60 \%$ of the resources have seen a collision on average. Since we expect 1 out of 4 resources to be free so that the estimation works on the edge, we expect around  $75 \%$ collisions. This shows that most of the users focus on only some resource such that it is possible only one user selects the last resource. \mgrevvv{The slight mismatch between the analysis and the simulations is limited to 4 percent. As the probability of a collision with $u$ users is taken into account to calculate the expected parallelization, this mismatch does only slightly affect the admission rejection probability. }

%\footnotetext{We consider an admission decision for later available resources as buffering admission.}

In Fig.~\ref{fig:ana_sim_comp} we varied number of resources in resolution channel and admission channel on the x-axis and we plotted the admission rejection probability on the y-axis. \mgrev{The analysis is given with a solid line while the simulation is marked with data points.} The analysis matches perfectly for low number of resources. We remind the reader that we still use the modified resource selection probabilities such that the devices will forcefully collide. Through this, we have the same number of collisions as the number of resources in the admission channel for high arrival rates. Thus, the assumption of deterministic number of \mgrevvv{arrivals} is valid for low number of resources for admission channel. However, this assumption does not hold if we a have high number of resources in the admission channel $M_{AC}$ such that more than 1 slot may be empty. \mgrevvv{The analysis is based on the assumption that high number of collisions will be observed even with increasing $M_{AC}$ therefore it is extra pessimistic. In reality, with increasing $M_{AC}$ the number of arrivals to RAQ becomes lower than $M_{AC}$ and the system can serve the collisions in a relaxed manner as shown by the difference between the analytical and simulative curves in the figure. However, the crucial point is the admission channel serves as a traffic shaper, affecting how many users will be in a collision. If the number of AC resources is too low, for instance 2, then $N$ users can be separated in two collisions with the size of $N/2$ at the best case. With a too low number of AC channel resources, most of the collisions are too big to serve them before the delay constraint\footnote{ The binary tree algorithm, even with infinite channels, has a maximum number of users it can solve that is limited with contention slots at a certain level of the tree, i.e., 2 users at level 1 and 4 at level 2. See \cite{janssen2000analysis} for maximum number of collided users that can be resolved within a certain deadline given infinite channels.}. Thus they are rejected, despite all resolution channel resources. With a certain higher number of AC resources, the infeasible collision do not or only rarely occur. As the infeasibility is avoided, the problem at hand becomes allocation of sufficient resources for each resolution. The analysis provides the critical number of AC resources to avoid this infeasibility.} In this way, the analysis should rather be used for low number of resources in these admission channel $M_{AC}$ \mgrevvv{where the assumption for deterministic arrivals holds}. 

For varying the number of resources in resolution channel $M_{RC}$, almost a linear behavior is observed for the rejection ratio. This is expected, since a better parallelization is enabled and resolutions with high number of users are almost linearly parallelizable \cite{gursu_mpcta}. 

The results for varying $M_{RC}$ in Fig.~\ref{fig:ana_sim_comp}: the increase in parallelization results in decreased rejection ratio as expected. For the really low $M_{RC}$ region, the curve has a better fit as explained with the analysis. After the \mgrevvv{Markovian} behavior for the number of servers vanishes, the curve deviates. Here we can emphasize a take out message for $p_r$. Varying the $M_{AC}$ has an expected behavior. With low $M_{AC}$, increasing the $M_{AC}$ exponentially decreases $p_r$ then after a certain number of resources it saturates to a linear decrease. This behavior is similar to a queue close to the stability limit. For $M_{RC}$ we have a linear decrease with a greater pace compared to the linear region of $M_{AC}$. Thus, we can conclude that a rule of thumb for dimensioning the resources for $M_{AC}$ and the $M_{RC}$ is: (1) allocate enough resource to $M_{AC}$ such that exponential $p_r$ behavior is overcome and (2) all the remaining resources are allocated to $M_{RC}$. The exponential region limit can be determined Eq.~\eqref{eq:pr} and taking the dip of the waterfall region as observed from Fig.~\ref{fig:ana_sim_comp}. For example in Fig.~\ref{fig:ana_sim_pbraq_mj_n45}, $M_{AC}=5$ and in Fig.~\ref{fig:ana_sim_pbraq_mj_n25} $M_{AC}=4$ should be selected and all other resources should be allocated to $M_{RC}$. 

\par Through the provided insights for the dimensioning of the algorithm, we now set the resources of AC/DC-RA accordingly and compare with the state of the art. 
%the more frequencies provided for admission the worse it behaves up to 10 frequencies. This unintuitive results comes from the fact that the algorithm reserves the frequencies for the duration of the deadline. Thus, a collision of two occupies same amount of frequencies as a collision of four. This also tells us that we solve bigger collisions more efficiently. This result is also motivating the use of less frequencies for the admission channel since it saves resources. On the other hand after this critical point of 10 frequencies, t

%We do not evaluate the schedule rejection analysis since it is exact. It is also heavily dependent on the channel quality to decide the throughput provided by resources and case-specific for the packet size of each application. This can be extended in a future work.

%Through analysis of these parameters we can fully trace the model and dimension the system as simple as it has been done in circuit switched telephony networks through the Erlang-B formula. 
\subsection{Comparison with Baseline}

We select the Dynamic Access Barring (DAB) algorithm as a baseline \cite{wang2015optimal}. This algorithm is an improved version of the access class barring algorithm currently used in LTE RACH. Through a backlog estimation the barring factor is updated dynamically. The barring of users enables optimal saturation throughput of Slotted ALOHA. It is also used with multiple QoS classes such that one class is prioritized over other, such that the no-priority class is fully barred when there are requests from the prioritized class. A dynamic barring factor is still applied to the prioritized class to guarantee optimal throughput. 

For AC/DC-RA we allocate 4 resources for each admission channels for each Class 1 and 2 and 12 resources for the resolution channel allocating 20 resources in total. For DAB algorithm we also allocate 20 resources to have a fair comparison. We use a deadline to refer to the delay constraint for comparison.

For AC/DC-RA we enable such prioritization through admission control, where one class is only accepted after the other class is fully admitted. Since we want to emphasize the priorities and the guarantee aspects, we use the same requirements for both classes. In order to show that the system can outperform the state of the art in extremely critical situations, we assume a Beta distributed arrival scenario representing bursty arrivals of M2M communications \cite{3rdGenerationPartnershipProject3GPP}. We have an activation time of $T_A=100$ slots for the beta arrival and we have other parameters of the distribution set as in the reference. \mgrev{There is an \textit{imbalance} between different traffic classes. The imbalance reflects a population ratio difference between traffic of two classes. We keep the naming as Class 1 and Class 2 where Class 1 denotes the prioritized class. However, an adjective is added to the classes to point out the traffic imbalance situation. These adjectives are Low and High, where the High class has 10 times more users than the Low class.}

\begin{figure*}[!htbp]
	\centering
	\begin{subfigure}{0.44\textwidth}
		\includegraphics[width=1\textwidth]{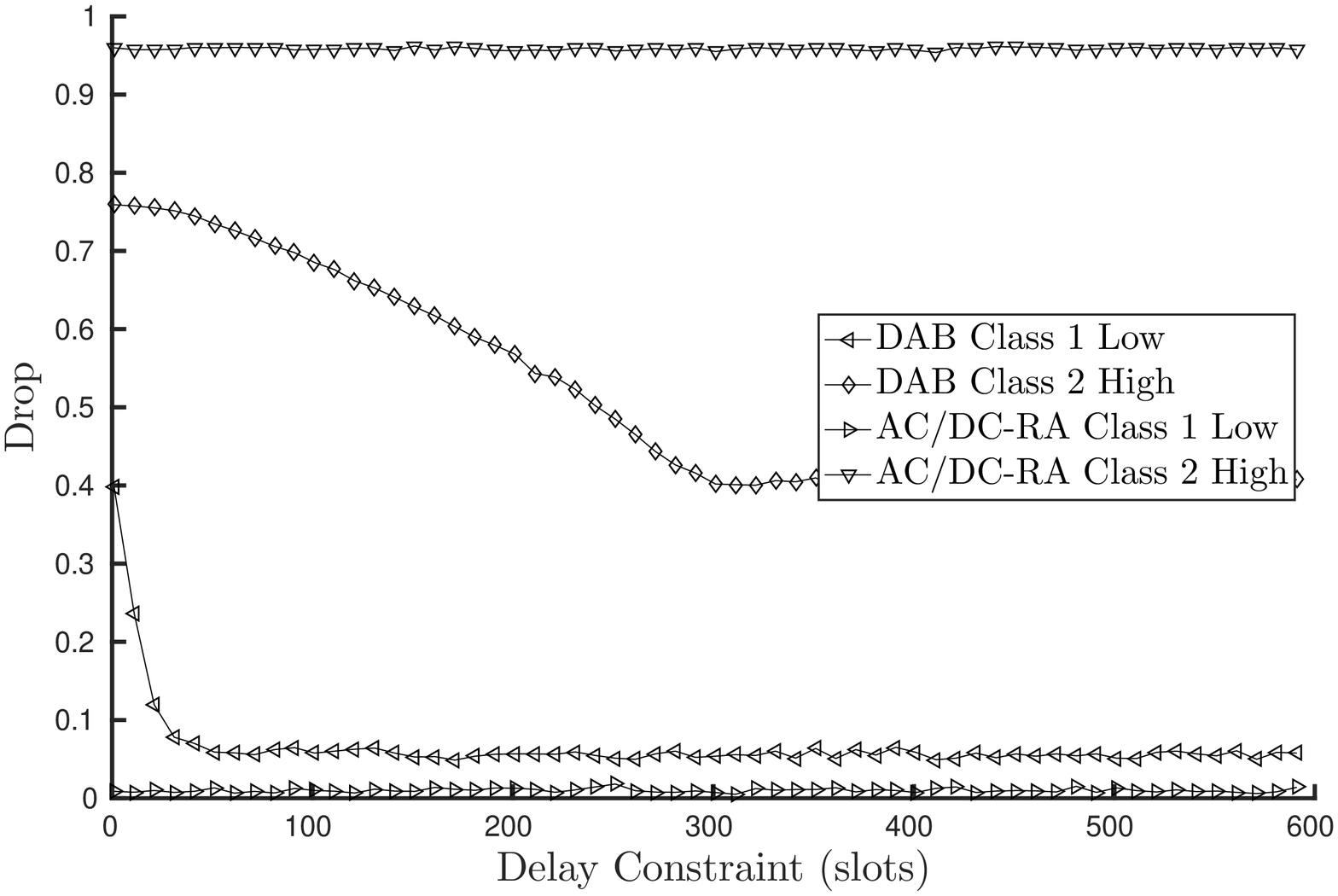}
		\caption{ Number of users for Class 2 $N_{high} = 2000$ and for Class 1 $N_{low} = 200$.}
		\label{fig:device_vary_comp14}
	\end{subfigure}	
	\begin{subfigure}{0.44\textwidth}
	\includegraphics[width=1\textwidth]{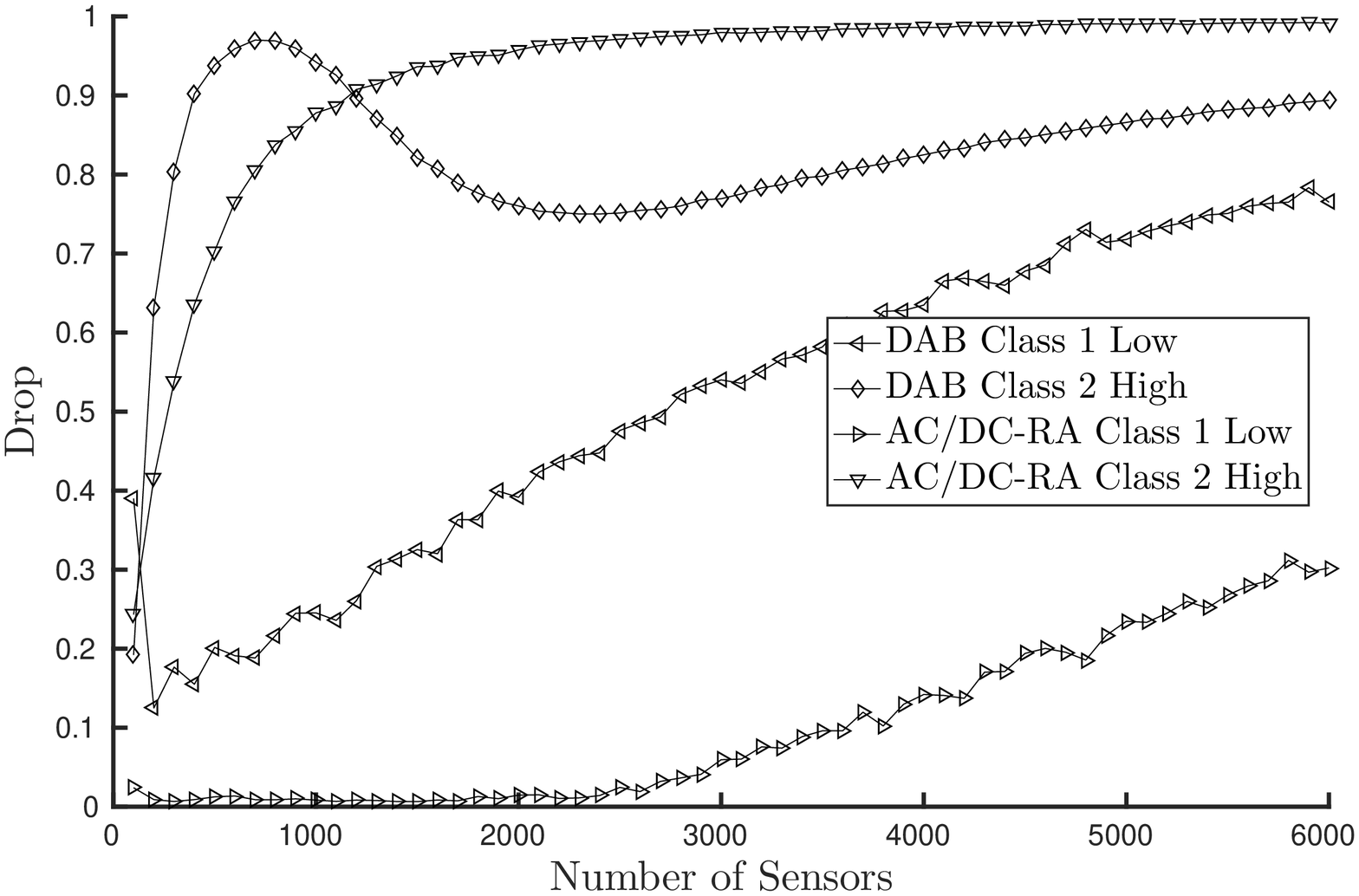}
	\caption{ Deadline $L = 10$ , Class 1 has 10 \% population of Class 2.}
	\label{fig:device_vary_comp16}
	\end{subfigure}
	\begin{subfigure}{0.44\textwidth}
		\includegraphics[width=1\textwidth]{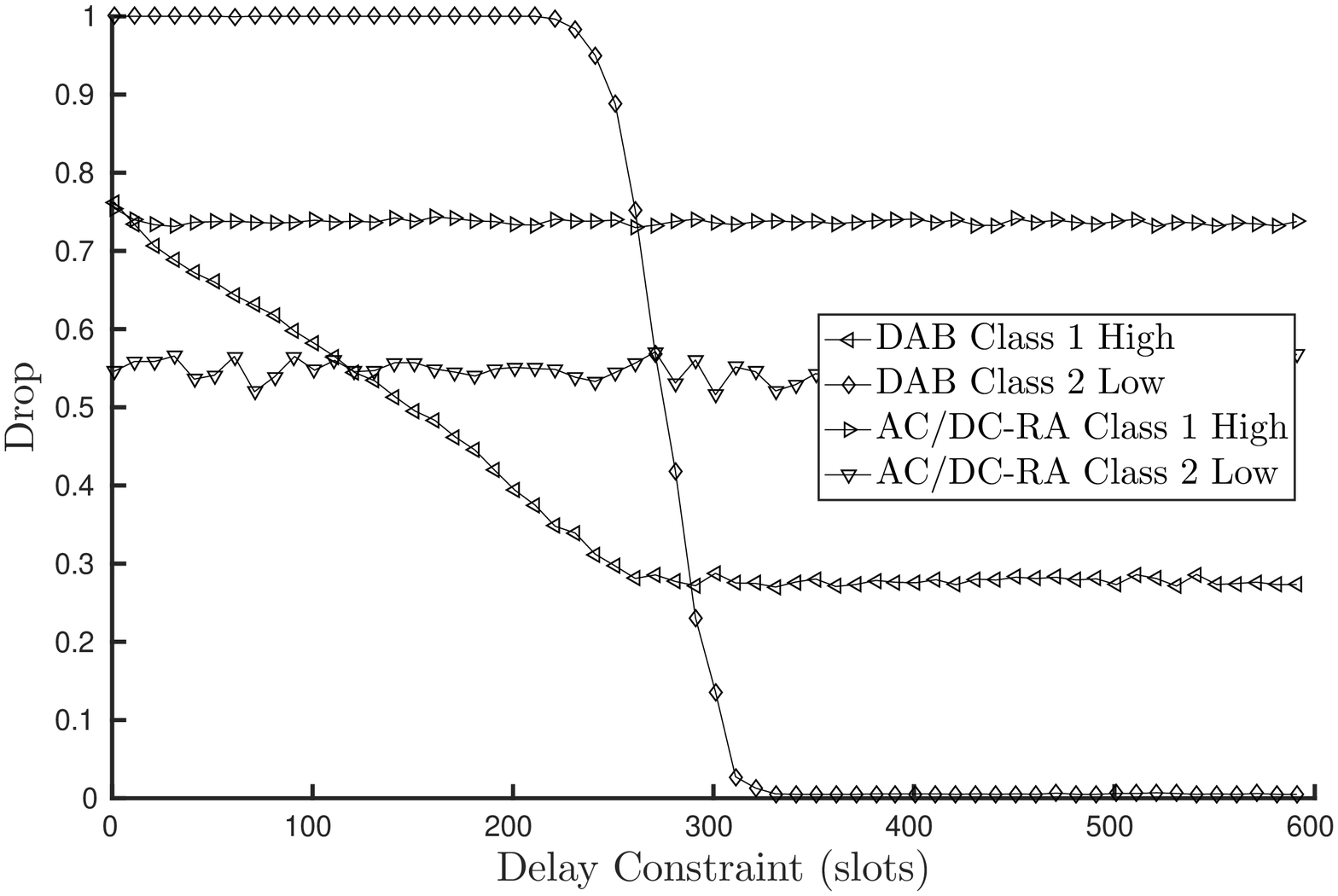}
		\caption{ Number of users for Class 1 $N_{high} = 2000$ and for Class 2 $N_{low} = 200$.}
		\label{fig:device_vary_comp15}
	\end{subfigure}
	\begin{subfigure}{0.44\textwidth}
		\includegraphics[width=1\textwidth]{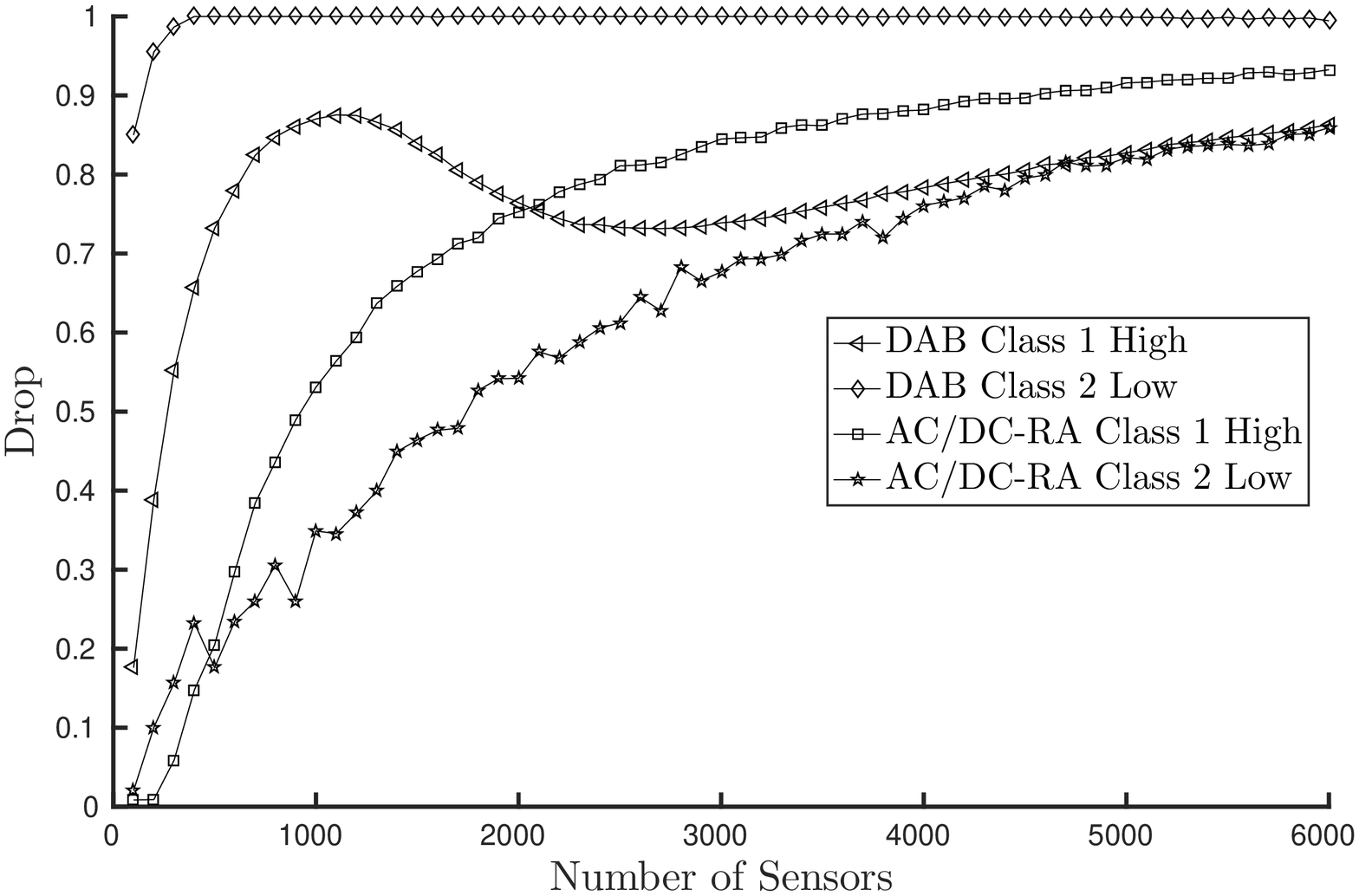}
		\caption{ Deadline $L = 10$ , Class 2 has 10 \% population of Class 1.}
		\label{fig:device_vary_comp17}
	\end{subfigure}
	\begin{subfigure}{0.44\textwidth}
	\includegraphics[width=1\textwidth]{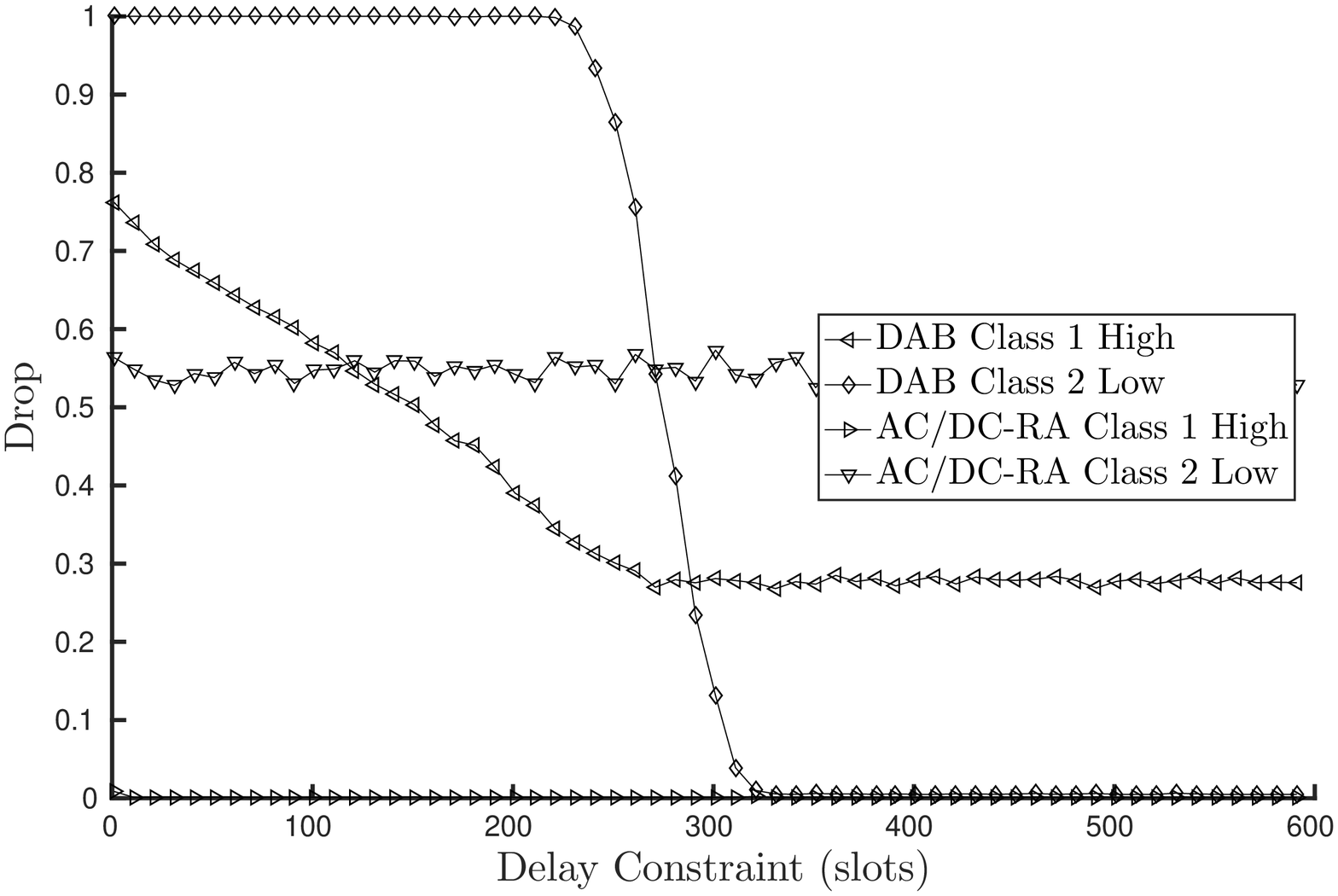}
	\caption{ Number of users for Class 1 $N_{high} = 2000$ and for Class 2 $N_{low} = 200$ with AC/DC-RA (buffering).}
	\label{fig:device_vary_buffer2}
	\end{subfigure}
	\begin{subfigure}{0.44\textwidth}
		\includegraphics[width=1\textwidth]{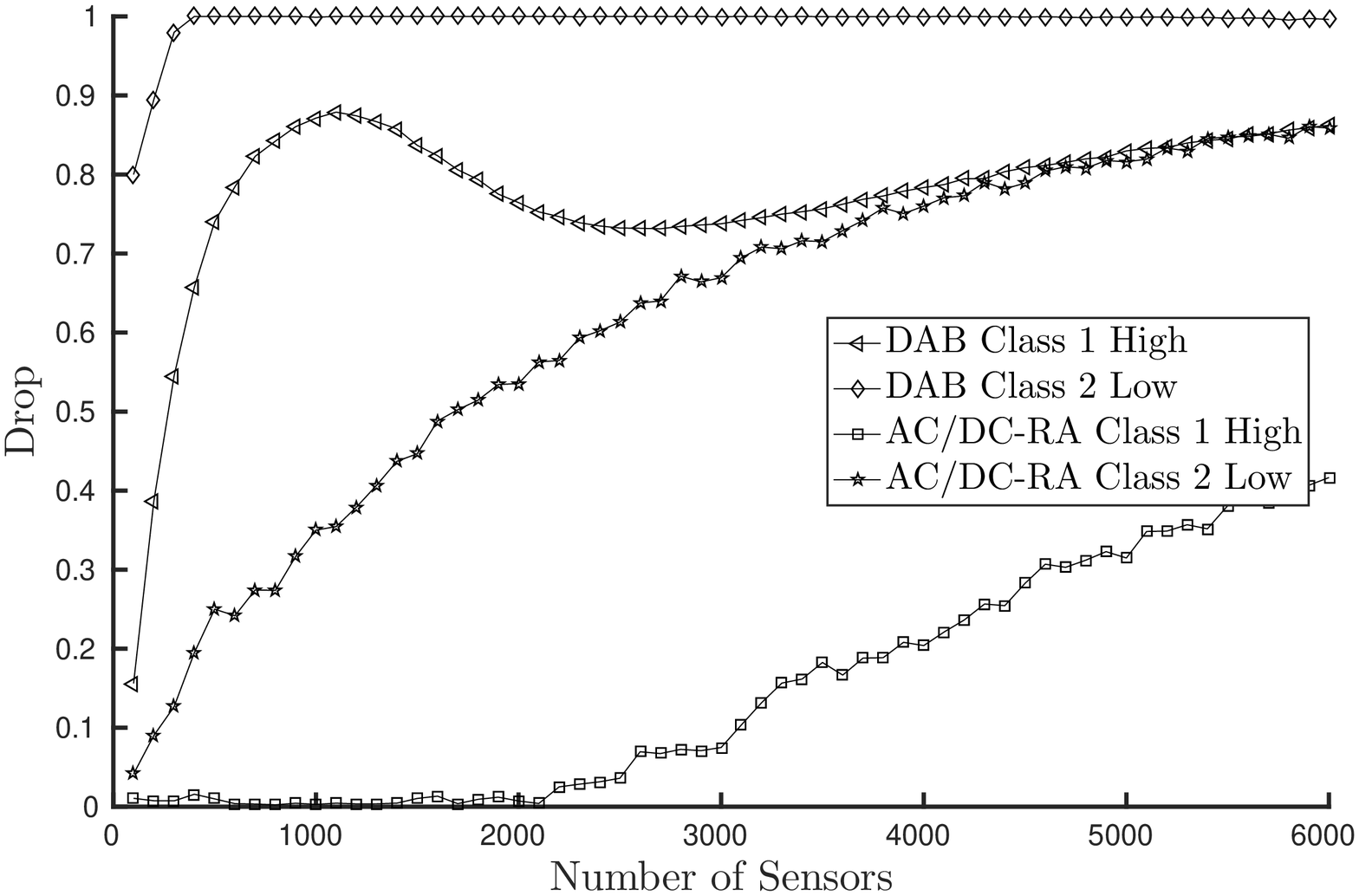}
		\caption{ Deadline $L = 10$ , Class 2 has 10 \% population of Class 1 with AC/DC-RA (buffering).}
		\label{fig:device_vary_buffer1}
	\end{subfigure}

	\caption{Comparison of AC/DC-RA with DAB varying the delay constraint $L$ and fixing the number of devices $N$. The traffic imbalance is introduced the one class has the number of users depicted as on the x-axis (denoted as High) while the other class has 10 \% of these users (denoted as Low).}
	\label{fig:device_vary_imba_comp}
\end{figure*}

\mgrev{In Fig.~\ref{fig:device_vary_imba_comp} we have plotted the AC/DC-RA against the baseline with the traffic imbalance. In Fig.~\ref{fig:device_vary_comp14} we have plotted the drop plus rejected ratio for varying delay constraints for Low Class 1 with 200 users and High Class 2 with 2000 users. The decrease in the number of users in the Low Class 1 results in an increased percentage of serviced users and most of the High Class 2 is blocked out. In Fig.~\ref{fig:device_vary_comp15} we have plotted the drop plus rejected ratio for varying delay constraints for Low Class 2 with 200 users and High Class 1 with 2000 users to represent a more scarce scenario. The Low Class 2 that uses DAB achieves a lower drop ratio thanks to low number of users. Some of the users from High Class 1 cannot be resolved in time even though the delay constraint is large. This is due to limited resource in RC that cannot react to burst arrivals. Interestingly for DAB, with larger delay constraints both classes achieve lower drop ratios compared to AC/DC-RA. This stems from the fact that obtaining delay and multiplicity information in the scenarios with relaxed delay constraints is not necessary for timely resolution. And the loss in resources to obtain this information cannot be made up with increased efficiency in the resolution channel. In Fig.~\ref{fig:device_vary_comp16} and Fig.~\ref{fig:device_vary_comp17} the drop plus rejected ratio is plotted against varying number of users. The x-axis depicts the number of users for High Class 2 in Fig.~\ref{fig:device_vary_comp16} and High Class 1 in Fig.~\ref{fig:device_vary_comp17} where the Low Class has 10\% of the population of the value depicted on x-axis for the High Class. On the other hand, such information is crucial for Low Class 1, as almost all users from that class can fit in the resolution channel. Thus, all multiplicity information obtained from AC is used. However, the information obtained for High Class 1 cannot help as there is not enough capacity in the resolution channel and these users have to be rejected irrespective of their multiplicity. This guides us to an important conclusion that if there is low amount of resolution channels and relaxed delay constraints the state of the art protocols can perform better. In Fig.~\ref{fig:device_vary_buffer1} and Fig.~\ref{fig:device_vary_buffer2} we have enabled admission of the users to a later available resource such that a certain waiting is enforced before accessing the resolution channel. We observe that this improves the performance but as the number of resources are limited all of the Class 2 users cannot be served.}

\section{Related Work}
\label{sec:related}

%In Fig.~\ref{fig:flow_diag} the flow diagram of a request is given for the current mobile networks \cite{TR36321}. When a device has a packet to transmit, first it uses a random access channel to request resources from the scheduler. If multiple devices select the same random access resource at the same time, requests cannot be decoded at the receiver. Then \textbf{contention resolution} is activated to separate these requests on further random access resources. After this resolution, the scheduler is aware of the \textit{identity} of the device and the \textit{requirements}, which we refer to as \textbf{QoS}, i.e., packet size, access time requirements. The scheduler may decide to admit or reject the request. If accepted the device is allocated a set of resources. With human type traffic, random access is only a problem in extremely crowded settings like stadiums. However, with increasing number of synchronous requests, the contention resolution algorithms are stressed and even cause system failures \cite{laya2014random}. This problem is attributed as the random access challenge.

%\subsection{State of the Art}

%There are two main research directions that are trying to overcome the delay constrained adaptive random access challenge for M2M: increasing the contention resolution efficiency or providing QoS guarantees. 

%The first direction is mostly focused on increasing the \textbf{throughput} of contention resolution. 

%A similar work that uses an inner and outer protocol for 

To the best of our knowledge this work is the first work that uses and admission control before the contention resolution. In the following, we discuss the admission control work that acts between contention resolution and scheduling as the most relevant state of the art.

\textbf{Admission control:} There are works that use admission control after random access for scheduling grants. A work from Bell Labs \cite{karol1995distributed} proposes a protocol that is called Distributed Queuing Request Update Multiple Access (DQRUMA). A scheduler keeps track of a distributed queue. The \textit{distributed queue} is the buffer status of multiple devices. When a device has a packet, it can place a request on the random access. This request can collide and the collision is resolved with a tree resolution algorithm. This is called the \textit{request} part of the algorithm. Through the state of the queues the scheduler decides whom to schedule. If a device is allocated a resource, then it can send a packet. At the end of a packet, a one bit header is added to notify that it has more packets. This approach is called piggy-backing and it \textit{update}s the distributed queue. In general, it has a similar structure as the LTE-A system in terms of access granting logic. Two particular differences are that tree resolution is used and through piggy-backing the load on the random access channel is decreased. A similar adaptation for the current mobile networks is also proposed in \cite{laya2016goodbye}. \mgrev{Distributed Queuing is an adaptation of the tree resolution protocol for requests. Thus, long waiting times for long data packets are avoided through transmission of request packets. In our work we measure efficiency in terms of slots such that use of requests or packets would not differentiate our results. But, it may make sense to use requests in certain scenarios so that the approach is expandable to use-cases where packet size matters. Here, we want to emphasize that the major difference of the MP-CTA is the \mgrevv{stochastic delay constraint} achieving capabilities due to parallelization of the tree resolution. Up to the best of our knowledge \mgrevv{stochastic delay constraint} access has been neglected by the DQ protocols. Some of the most recent work on DQ have taken a load reactivity direction. In \cite{bui2018design} the adaptation of the DQ protocol to LTE with load reactivity is provided and an analytical delay profile and transmission for group paging scenario using distributed queuing can be found in \cite{cheng2018modeling}. }  

 Admission control after random access is evaluated in \cite{abdrabou2008stochastic},\cite{song2007improving} for stochastic delay guarantees for calls over the IEEE 802.11 standard. There are also techniques which provide load adaptivity \cite{vilgelm2017latmapa} enabling optimal resource separation between various classes instead of using an admission control.

%Second branch of current research effort works on providing \textbf{QoS guarantees} to the requests. 

\textbf{Delay constrained random access:} The \mgrevv{stochastic delay constraint} in random access can be investigated mainly in two branches. First branch assumes that the \textit{arrival distribution} is known such that the resolution can provide guarantees for that arrival setting. In this branch, the total number of devices is assumed to be known. However, the exact activation time of each device is not known. Thus, contention algorithms are optimized with the knowledge of the total number of devices. In \cite{stefanovic2013joint}, authors suggests that devices are polled to the access channel. After each passing time-slot the probability of access decreases where after some time there are only idle channels. They also suggest that probability is modified exponentially. The set of outcomes is fed to a maximum likelihood estimator to provide the total number of backlogged devices. Through the knowledge obtained through polling they allocate required number of resources for contention. However, in case polling is done periodically it can translate into added delay. Another work with known number of users is \cite{stefanovic2017frameless} where authors have investigated resolution of certain number of users via successive interference cancellation capability within a certain limited amount of time. A recent work \cite{vilgelm2018icc} uses stochastic network calculus to provide stochastic bounds on the delay for dynamic access barring. This would make DAB also usable under an admission controlled way like we provide here. 

Second branch is where arrivals shaped with collision avoidance techniques or polling to arrange in a manner that the resolution can provide guarantees. In case a sensor can access the random access channel as soon as possible, it would save two critical resources time and energy. The state of the art is mostly dealing with this assuming that the central station can \textbf{detect} the number of active devices on each channel. For instance a recent work \cite{polling_abbas} suggested to modify the random access behavior with successive interference cancellation technique assuming that the base station can detect the multiplicity of the number of active devices in order to guarantee certain service requirements. However, in case these assumptions are not valid, there would be no guarantees. And if the behavior of a device is constrained to obey these assumptions, the random access channel cannot be used \textit{randomly} anymore. Another work  \cite{gursu2017hybrid} has shaped arrivals through pre-backoff with the size knowledge of the burst arrival, where an optimized tree resolution performance is achieved thanks to the collision avoidance. 
\section{Conclusion}
\label{sec:conc}
In this paper we introduce a new random access protocol AC/DC-RA - Admission Control based Traffic-Agnostic Delay-Constrained Random Access. This protocol changes the random access paradigm with an addition of an \textit{admission control decision}. The admission control decision is based on a novel collision size estimation and analytical modeling of the resolution. This estimation enables an accurate guess for the delay of a contention resolution. %This admission channel can be used to include the scheduling capacity for a holistic view on medium access for M2M communications.

We furthermore provide a Markov Chain based analysis to investigate the behavior of the protocol. Then we show that the dimensioning problem stemming from this protocol can be modeled in closed form and solved offline. We then compare the algorithm to a state of the art approach and show that in order to guarantee a resolution the proposed modifications are necessary. Otherwise guaranteeing the reliability is only best effort. We claim that such a paradigm shift is necessary to use an admission channel to enable stability and scalability of random access against any type of unexpected traffic, as it would be the case for M2M communications.% This is one of the ways to achieve scalable delay constrained random access. Up to our best knowledge this paper is one of the first that guarantees random access deadline making it a parameter of the analysis without any assumptions on the arrival traffic distribution.

\mgrevvv{Future work can investigate the effects of adjusting the number of channels in a dynamic fashion, e.g., each slot. This can unravel the overhead of broadcasting the system updates. As the periodicity may depend on the duty cycle of sensors a large overhead may be required.}

  \appendices
\section{Proof for expectation calculation of Coupon Collector's Problem with unequal probabilities}
\label{app:proof1}

We start by repeating the probability $p_i$ that any user accesses a channel $i$. If we have $z$ users in the system, we have a mean arrival of $\lambda_i = p_i\cdot z$ on the $i^{\text{th}}$ resource. Thus the idle probability on that resource is $	 e^{-\lambda_i} = e^{-p_i\cdot z}$. Non idle probability on that resource is $1-e^{-p_i\cdot z}$. If we multiply this probability for all resource that had a busy signal we get,$ \prod\limits_{i \in \mathcal{M}_{s+c}}(1-e^{-p_i z}), $
which is the probability to have non-idle on all the busy resource. This probability can be used for the likelihood \mgrevvv{of having all busy signals for the set of $\mathcal{M}_{s+c}$ resources.}.  However, if we take the probability of observing at least one idle in the busy resources $
1-\prod\limits_{i \in \mathcal{M}_{s+c}}(1-e^{-p_i z})$, 
\mgrevvv{ with increasing $z$, this probability goes to zero and taking the expectation \textit{for each added user} where the sum goes to infinity is no problem}. The expectation gives then the expected value of users to be added until no idle in the selected resources $\mathcal{M}_{s+c}$ is observed,

\begin{equation}
\Exp{Z | \mathcal{M}_{s+c}} = \sum\limits_{z=0}^{\infty} \left( 1 - \prod\limits_{i \in \mathcal{M}_{s+c}}(1-e^{-p_i z}) \right).%dt,
\end{equation}

\section{Proof of Theorem \ref{thm:1}}
\label{sec:app3}

Collision probability $p_c$  is a sub problem of probability of observing $u$ balls in any bins, with $N$ balls into $M$ bins with unequal probabilities. We start this with re-defining the set of possible bins as $\mathcal{M}  = \{ 1, ..., M \}$. Then we define $\mathbf{S}^J$ that denotes the sequence for all possible $J$-ary combination sequences of the elements of set $\mathcal{M}$. An example would be $\mathbb{S}^2 = ( \{1,2\},\{1,3\},\{2,3\} )$,
with $\mathcal{M}  = \{ 1, 2, 3 \}$ and $J=2$. We will also use the term $\mathbb{S}_{x,y}^J$ where $x \in  (1, ... , {M \choose J} )$ denotes different sets in the sequence and $y \in ( 1, ..., J  )$ denotes different elements of each combination sequence. \mgrevv{From the} example we have $\mathbb{S}^2_{1,2} = 2$ and $\mathbb{S}^2_{3,2} = 3$.  We will also use $\mathbb{S}_{x}^J$ when we want to refer just to the set.
\par Now we denote $W^L_{x,y}(u)$ as the probability function for  selecting $u$ users out of $N$ users for the $L^{\text{th}}$ time  as given in $W^L_{x,y}(u) = {{N - u(y-1)}\choose{u}} \left( p_{{\mathbb{S}_{x,y}^L}} \right)^{u}$.
%The $A_{\mathbb{S}_{x,y}^L}$ denotes a frequency from the combination where $p_1$ is the probability to select frequency $1$.
Now we denote $Z^J_{x,y}(u)$ as the probability function for  $N-J\cdot u$ users out of $N$ users selecting all the other frequencies except the onces denoted by the set $x$ as given in $Z^J_{x,y}(u) = \left( 1 - \prod_{z=1}^{y} p_{{\mathbb{S}_{x,z}^J}} \right)^{N-J\cdot u}$. Using these we define the probability function to obtain $J$ occurrence of $u$ users out of $N$ users with a recursive calculation 
\begin{equation}
P_{J}^x (u)  = \left(\prod_{y=1}^{J} W^{J}_{x,y}(u) \right) Z^J_{x,y}(u)   - \left( \sum_{j=J+1}^{max(J)} \sum_{ x \in \{ \mathbb{S}_x^J \subset \mathbb{S}_x^j \} } P_{j}^x (u) \right)
\label{eqn:iterative_p}
\end{equation}
where $max(J)$ is maximum number of occurrence of $u$ given $N$ users which is given with $ min(\lfloor \frac{N}{u}\rfloor,M)$. In Eq.~\eqref{eqn:iterative_p} the upper part calculates the joint probability of having $J$ occurrence of $u$ users, while the lower part is subtracting the probabilities for $j>J$ occurrences. After we have non-overlapping probabilities for all occurrences of $u$, i.e., probability to have \textbf{just} $J$ occurrence of $u$ users, we can sum them up to have the probability to obtain $u$ users,
\begin{equation}
p_c (u) = \sum_{j=1}^{J} \sum_{x \in \mathbb{S}^J_x} \left(  P_{J}^x (u) \cdot l \right)
\label{eq:coll}
\end{equation}
where we multiply with the occurrence of $u$ users since we treat each outcome independently and we have to weigh accordingly.

\bibliographystyle{IEEEtran}
\bibliography{decara}

% that's all folks
\end{document}